\newtheorem{proposition}{\bf Proposition}[section]
\newtheorem{lemma}{\bf Lemma}[section]
\newtheorem{theorem}{\bf Theorem}[section]
\theoremstyle{definition}  %Title and number in bold, body in normal font. 
   \newtheorem{defn}{Definition}[section]
   \newtheorem{eg}[defn]{Example}
   \newtheorem{rmk}[defn]{Remark}
  \theoremstyle{plain}  %n Title and number in bold, body in italic (default).
   \newtheorem{prop}[defn]{Proposition}
   \newtheorem{cor}[defn]{Corollary}
  \theoremstyle{remark} %Title and number in italic, body in normal font.
\newcommand{\msc}[1]{\mathscr{#1}}
 \newcommand{\mbb}[1]{\mathbb{#1}}
   \newcommand{\B}[1]{\mathscr{B}({#1})}
   \newcommand{\CH}{\mathcal{H}}
   \newcommand{\CK}{\mathcal{K}}
   \newcommand{\GCk}{\Gamma(\mathbb{C}^{k})}
   \newcommand{\GCd}{\Gamma(\mathbb{C}^{d})}
   \newcommand{\GH}{\Gamma(\mathcal{H})}
   \newcommand{\GL}{\Gamma(L)}
     \newcommand{\BC}{\mathbb{C}}
   \newcommand{\BR}{\mathbb{R}}
    \newcommand{\TC}[1]{\msc{T}({#1})}
   \newcommand{\bu}{\mathbf{u}}
   \newcommand{\bv}{\mathbf{v}}
   \newcommand{\m}{\mathbf{m}}
   \newcommand{\w}{\mathbf{w}}
   \newcommand{\x}{\mathbf{x}}
   \newcommand{\y}{\mathbf{y}}
   \newcommand{\z}{\mathbf{z}}
 \newcommand{\ul}[1]{\underline{#1}}
\DeclareMathOperator{\diag}{Diag}
\DeclareMathOperator{\re}{Re}
\DeclareMathOperator{\im}{Im}
\title[What is a Gaussian channel?]{What is a Gaussian channel, and when is it physically implementable using a multiport interferometer?}
\author{Repana Devendra}
\address{Department of Mathematics, Indian Institute of Technology Bombay,  Mumbai, Maharashtra, 400076, India}
\email{r.deva1992@gmail.com}
\author{Tiju Cherian John}
\address{Department of Electrical and Computer Engineering, The University of Arizona, Tucson, AZ, 85719, USA}
\email{tijucherian@fulbrightmail.org}
\author{K. Sumesh}
\address{Department of Mathematics, Indian Institute of Technology Madras,  Chennai, Tamil Nadu, 600036, India}
\email{sumeshkpl@gmail.com}
\date{\today}
\begin{document}

\maketitle

\begin{abstract}
Quantum Gaussian channels are fundamental models for communication and information processing in continuous-variable quantum systems. This work addresses both foundational aspects and physical implementation pathways for these channels. Firstly, we provide a rigorous, unified framework by formally proving the equivalence of three principal definitions of quantum Gaussian channels prevalent in the literature,  consolidating theoretical understanding. Secondly, we investigate the physical realization of these channels using multiport interferometers, a key platform in quantum optics. We focus on the standard parameterization involving matrices $X$ and $Y$, governed by the condition $Y - i(J - X^T J X) \ge 0$ with $J=\begin{bsmallmatrix} 0 & I \\ -I & 0 \end{bsmallmatrix}$. The central research contribution is a precise characterization of the specific pairs $(X, Y)$ that correspond to Gaussian channels physically implementable via linear optical multiport interferometers. This characterization bridges the abstract mathematical description with concrete physical architectures. Along the way, we also resolve some questions posed by Parthasarathy [Indian J. Pure Appl. Math. 46, 419–439 (2015)].
 
 \smallskip
\noindent \textbf{Keywords:} quantum information, quantum Gaussian channels, multi-port interferometer, linear optical devices, Gaussian states

\smallskip
\noindent \textbf{2020 Mathematics Subject Classification:}  81P47, 81V73, 81V80, 46L07
\end{abstract}
%%%%%%%%%%%%%%%%%%%%%%%%%%%

%\keywords

%\subjclass

\tableofcontents

\section{Introduction}
 Quantum channels are fundamental for classical and quantum communication systems, describing the dynamics of quantum states in various physical processes, including noise, decoherence, and information transfer \cite{Watrous_2018, Wilde_2013, Nielsen_Chuang_2010}. 
 Among quantum  channels,  Gaussian channels  are particularly important in the study of continuous-variable  quantum systems. This article focuses on this particular class of quantum channels. They play a fundamental role in areas such as open quantum systems, classical-quantum communication systems like optical fibers, general quantum communication, quantum error correction, and quantum computation, making their theoretical and practical understanding crucial \cite{Serafini2017-mz, Sabapathy2017-xp, Weedbrook2012-zl, Guha2008-iw, Holevo2001-zc}. The mathematical characterization of Gaussian channels is a rich and well-studied area, with multiple definitions appearing in the literature. These definitions include descriptions based on the action on transformation of Gaussian states (or equivalently, mean and covariance matrices), Weyl unitary operators, and Gaussian unitary dilation \cite{Holevo2019-zo, Serafini2017-mz,Eisert2007-rl, Holevo2001-zc, Fannes1976-il}. Although it is implicitly understood that these definitions are all equivalent, it has not been formally proved in the literature as a unified theorem. This gap in formalization not only leaves the equivalence as an unstated assumption, but also poses challenges for researchers seeking a mathematically rigorous foundation for their work.

 Beyond their theoretical significance, Gaussian channels are crucial for the experimental realization of quantum technologies. A key practical question is their implementability using optical devices such as multiport interferometers, which perform linear transformations of optical modes through passive components such as beam splitters and phase shifters \cite{Reck1994-un}. Although multiport interferometers are widely used in quantum optics, a detailed mathematical framework to determine the conditions under which they can implement a given Gaussian channel remains absent from the literature. This article aims to address these issues with a mathematically oriented treatment that unifies existing knowledge and extends it to new domains.

 The article is organized as follows: In Section \ref{sec:prelim}, we introduce the basic objects of the theory. In Section \ref{sec:channels}  we establish the equivalence of various definitions of the Gaussian channel through rigorous mathematical arguments. The main technical result that aids our proof is Proposition \ref{prop-symplectic-extension} which is inspired from \cite[page 99]{Serafini2017-mz}. 
 Section \ref{sec:optics}  presents the characterization of the physical implementability of Gaussian channels using multiport interferometers. We investigate the parameterization of Gaussian channels using two matrices $X$ and $Y$ satisfying the condition $Y-i(J-X^TJX)\geq 0$, where $J=\bmqty{0&I\\-I&0}$, focusing on the conditions under which they can be physically implemented using multiport interferometers.  
 Along the way, we also answer affirmatively some questions asked by Parthasarathy in \cite{KRP15}.

\section{Preliminaries}\label{sec:prelim}

    For any Hilbert space $\CH$, whether it is finite-dimensional or infinite-dimensional, and regardless of whether it is defined over the field of real or complex numbers, we denote the space of all bounded linear operators on $\CH$ as $\B{\CH}$. The adjoint of a real linear operator $A$ defined in a real Hilbert space is denoted by $A^T$ and that of a complex linear operator $A$ defined in a complex Hilbert space is denoted by $A^\dagger$.  In the space of all bounded self-adjoint operators ($A=A^{\dagger}$), as well as in the space of real symmetric ($A=A^{T}$) or complex Hermitian matrices, we define a partial order, denoted by $A\leq B$. This ordering is defined by the condition that $\mel{\z}{(B-A)}{\z}\geq 0$ for all vectors $\z\in\CH$. If $0\leq B$, then we say that $B$ is a \textit{positive operator} (or \textit{matrix}, depending on the context). We  say that $A$ is a \textit{strictly positive operator} (or \textit{matrix}) if it is positive and invertible. The symbol $\geq $ is  used with obvious meanings arising from the discussion above. 
    
    The space of all trace-class operators on $\CH$ is denoted by $\TC{\CH}$. By a \textit{state} in $\CH$ we mean a positive trace-class operator $\rho\in\TC{\CH}$ satisfying $\tr(\rho) = 1$. 
    %The set of all states on $\CH$ is denoted by $\ST{\CH}$.  
    If $\CH$ and $\CK$ are Hilbert spaces and $\rho$ is a state on $\CH\otimes \CK$, then $\tr_2(\rho)\in\TC{\CH}$ denote the partial trace of $\rho$ over  the second factor $\CK$.
    %Further, the states 
    % \begin{equation}\label{eq:64}
    %     \rho_1 := \tr_2 (\rho)\in\TC{\CH}
    %     \qquad\mbox{and}\qquad
    %     \rho_2 := \tr_1 (\rho)\in\TC{\CK}
    % \end{equation}
    % are called the  \textit{marginal states} of $\rho$ on $\CH$ and $\CK$, respectively.
    For $d_1,d_2\in\mathbb{N}$, we define $M_{d_1\times d_2}(\mbb{C})$ and $M_{d_1\times d_2}(\mbb{R})$ to be the space of all $d_1\times d_2$ complex matrices and real matrices, respectively. In particular, $M_d(\mbb{C})=M_{d\times d}(\mbb{C})$ and $M_d(\mbb{R})=M_{d\times d}(\mbb{R})$.

   % In this article,  we restrict to the Hilbert spaces $\CH=\BC^d$ (or $\BR^d$), where $d\in\mbb{N}$. 
    Any vector $\z\in \BC^d$ (or $\BR^d$) is represented as  a column vector $\z = [z_1,z_2,\dots,z_d]^T$ in the standard orthonormal basis of $\BC^d$ (or $\BR^d$). The  round bracket  $(z_1,z_2,\dots,z_d)$ is often used to refer to the column vector $\z$. For a vector $\z = (z_1,z_2,\dots,z_d)\in\BC^d$ we define $\bar{\z} := (\bar{z}_1,\bar{z}_2,\dots,\bar{z}_d)$, where $\bar{z}_j$ denotes the complex conjugate of $z_j\in\BC$. Given $\z = (z_1,z_2,\dots,z_d)$ and $\w =(w_1,w_2,\dots,w_d)$ in $\BC^d$ we define
    \begin{align*}
       \z^T\w := \sum\limits_{j=1}^d z_jw_j; \qquad
        \braket{\z}{\w}:=\bar{\z}^T\w= \sum\limits_{j=1}^d\bar{z}_jw_j; \qquad
        \norm{\z} :=\braket{\z}{\z}^{1/2}.
    \end{align*}
   We consider $\BC^d$ as a $2d$-dimensional  real Hilbert space with $\re \braket{\cdot}{\cdot}$ as the inner product, and as a \textit{symplectic space} with its standard symplectic form  $-i\im\braket{\cdot}{\cdot}$. A set $\{\z_1,\cdots,\z_k,\w_1,\cdots,\w_k\}$ $\subseteq~\mbb{C}^{d},1\leq k\leq d$ is said to be a \textit{symplectic set} if 
    \begin{align*}
        \Im\braket{\z_i}{\z_j}=0=\Im\braket{\w_i}{\w_j} \quad\text{ and }\quad \Im\braket{\z_i}{\w_j}=\delta_{ij},\qquad\forall~1\leq i,j\leq k.
    \end{align*}
    If $k=d$, then a symplectic set will form a basis for the real vector space $\mbb{C}^d$, and in such a case, we call it a \textit{symplectic basis}. It is well known that any symplectic set can be extended to a symplectic basis \cite[Theorem 1.15]{Gosson2006-lr}. 
    
    We identify $\BC^d$ with $\BR^{2d}$ as real Hilbert spaces in two different ways. With these identifications, we also discuss the symplectic bases in $\BR^{2d}$. In the first case, we identify $\BC^d=\BR^d+i\BR^d$ with $\BR^{2d}=\BR^d\oplus\BR^d$ via the identification $\z=\x+i\y\mapsto (\x,\y)$. Under this identification, we realize an element $A+iB$ of $M_{d}(\BC)= M_d(\BR)+iM_d(\BR)$ as the element $\bmqty{A&-B\\B&A}$ of $M_{2d}(\BR)=M_2(M_d(\BR))$. In particular, $-iI_d\in M_d(\BC)$ corresponds to the block matrix 
    \begin{align*}
        J_{2d}:=\bmqty{0&I_d\\-I_d&0}\in M_{2d}(\BR),
    \end{align*}
    where $I_d$ is the identity matrix of size $d\times d$. Note that $J_{2d}$ is invertible with the inverse $J_{2d}^{-1}=J_{2d}^T=-J_{2d}$. In the second case, we write $d=d_1+d_2$, and write $\BC^d=\BC^{d_1}\oplus\BC^{d_2}=\BR^{2d_1}\oplus\BR^{2d_2}$ via the identification
    \begin{align*}
        (\x_1+i\y_1) \oplus (\x_2+i\y_2)\mapsto (\x_1,\y_1, \x_2,\y_2),\qquad\forall~\x_j,\y_j\in\BR^{d_j}, j=1,2.
    \end{align*}
    %by identifying an element $\z=(\x_1,\x_2)+i(\y_1,\y_2)$ of $\BC^d$ with $(\x_1,\y_1)\oplus (\x_2,\y_2)\in \BR^{2d_1}\oplus\BR^{2d_2}$, where   $\x_j,\y_j\in\BR^{d_j}, j=1,2$.
    Under this identification, given $A_{i,j},B_{i,j}\in M_{d_i\times d_j}(\BR)$  we identify 
    \begin{align*}
        M_d(\BC)\ni\bmqty{A_{11}+iB_{11}&A_{12}+iB_{12}\\ A_{21}+iB_{21}&A_{22}+iB_{22}}
        \mapsto
        \bmqty{A_{11}&-B_{11}&A_{12}&-B_{12}\\ B_{11}&A_{11}&B_{12}&A_{12} \\ A_{21}&-B_{21}&A_{22}&-B_{22}\\ B_{21}&A_{21}&B_{22}&A_{22}}\in M_{2d}(\BR).
    \end{align*}
    In particular, $-iI_d\in M_d(\BC)$ corresponds to the block matrix $J_{2d_1}\oplus J_{2d_2}$. Note that, under both identifications, the adjoint of a complex matrix corresponds to the transpose of its identification.

% \begin{proposition}\label{prop-skew symmertric-diag}\cite[Corollary 2.5.11]{Horn2012-xx}\todo{SK: delete this} \todo{we are using this in the proof of lemma 3.2}
%     Let $X\in M_{2d}(\mbb{R})$ be an invertible skew-symmetric matrix. Then
%     \begin{align*}
%         X=R^T\small\bmqty{0&D\\-D&0}R
%     \end{align*} 
%     for some strictly positive diagonal matrix $D\in M_{d}(\mbb{R})$ and an orthogonal matrix $R\in M_{2d}(\mbb{R})$.
% \end{proposition}

\subsection{Symplectic and Orthosymplectic Matrices}\label{sec:symp}

By a \textit{symplectic transformation} of $\BC^d$ we mean a real linear operator $L$ on $\BC^d$ that preserves the standard symplectic form, i.e.,
 \begin{align}\label{eq-Sym-Tran}
     \im \braket{L\bu}{L\bv} = \Im\braket{\bu}{\bv}\qquad\forall~\bu, \bv \in \BC^d.
 \end{align}
 Now, fix an identification of $\BC^d$ with $\BR^{2d}$, as described in the previous section. This identification allows us to represent the real linear transformation $L$ as a $2d\times 2d$ real matrix (again denoted as $L$). Consider a specific $J\in \{J_{2d}, J_{2d_1} \oplus J_{2d_2}\}$, where $d=d_1+d_2$. It is important to note that $\z^TJ\z'=\Im\braket{\z}{\z'}$ for all $\z,\z'\in\BC^d=\BR^{2d}$. Therefore, the equation \eqref{eq-Sym-Tran} is equivalent to stating that $L^TJL=J$.  The elements of the set 
    \begin{align*}
        Sp(2d, J):=\{L\in M_{2d}(\BR): L^TJL=J\}
    \end{align*}
    are called \textit{$J$-symplectic} matrices. A $J$-symplectic matrix that is also orthogonal is called a \textit{$J$-orthosymplectic} matrix. The set $Sp(2d, J)$ forms a group under multiplication, called \textit{$J$-symplectic group}. Clearly, $J\in Sp(2d, J)$. Every $J$-symplectic matrix $L$ has determinant $1$, and its inverse is given by $L^{-1}=J^TL^TJ$. Note also that $L\in Sp(2d, J)$ implies  $L^T\in Sp(2d, J)$. % To see $L^T$ is symplectic take inverse of of the identity $(L^{-1})^TJL^{-1}=J$. 
 % Throughout this article, we primarily focus on two skew-symmetric matrices in $M_{2d}(\BR)$ whose squares are equal to $-I_{2d}$; they are
 % \begin{align*}
 %     J_{2d}=\bmqty{0&I_d\\-I_d&0} 
 %     \quad\mbox{and}\quad
 %     \widehat{J}_{2d}:=\bigoplus_{j=1}^dJ_2=\bigoplus_{j=1}^d\bmqty{0&1\\-1&0}.
 % \end{align*}
 A $J_{2d}$-symplectic matrix and a $J_{2d}$-orthosymplectic matrix are simply called a \textit{symplectic} matrix and an \textit{orthosymplectic} matrix, respectively.  Note that if  $L=\bmqty{\bu_1&\bu_2&\cdots&\bu_{2d}}\in M_{2d}(\BR)$ and $J=[J_{ik}]\in M_{2d}(\BR)$, then   
     \begin{align}\label{eq-sym-matrix-basis}
         L\in Sp(2d, J)
             %&\Longleftrightarrow \bmqty{\bu_1&\bu_2&\cdots&\bu_{2d}}^TJ\bmqty{\bu_1&\bu_2&\cdots&\bu_{2d}}=J \\
             &\Longleftrightarrow \bu_i^TJ\bu_k=J_{ik},\qquad\forall~1\leq i,k\leq 2d. 
     \end{align}
     Consequently, a matrix $L=\bmqty{\bu_1,\cdots,\bu_d,\bv_1,\cdots,\bv_{d}}\in M_{2d}(\BR)$ is a symplectic if and only if its columns form a symplectic basis.

\subsection{Weyl Representation and Bogoliubov Transformations}

 %In this Section, we describe the Weyl representation of $\BC^d$. We refer to \cite[Chapter 2]{Krp92} for the details. One may also refer to \cite{TiKR21} for a quick recap of all the properties mentioned here.
 The \textit{symmetric Fock space} over $\BC^d$ is defined by
 \begin{align*}
        \Gamma(\BC^d):= \bigoplus_{k=0}^{\infty}(\BC^d)^{{\text{\textcircled{s}}}^k},
 \end{align*}
 where $(\BC^d)^{{\text{\textcircled{s}}}^k}$ is the $k$-times \textit{symmetric tensor product} of $\BC^d$ (see \cite[Chapter 2]{Krp92} for details). For $\bv\in \BC^d$, the \textit{exponential vector} $\ket{e(\bv)}$ is defined as 
 \begin{equation*}
    \ket{e(\bv)} := \bigoplus_{k=0}^\infty  
    \frac{\bv^{\otimes^k}}{\sqrt{k!}},
 \end{equation*}
 where $\bv^{\otimes^0} := 1\in\BC$. The span of exponential vectors forms a dense subset of $\Gamma(\BC^d)$. 
 For $\bu\in \BC^d$, the \textit{Weyl unitary operator}, $W(\bu)$ is the unique unitary operator on $\GCd$ satisfying  
 \begin{equation*}%\label{weyl-eq:7.1}
    W(\bu) \ket{e(\bv)}= \exp{-\frac{1}{2}\|\bu\|^2-\braket{\bu}{\bv}} \ket{e(\bu+\bv)}, \qquad \forall~\bv\in \BC^d. 
 \end{equation*}
 The adjoint of $W(\bu)$ is given by  $W(\bu)^{\dagger}=W(-\bu)$.  The Weyl unitary operators satisfy the Weyl form of the the \textit{canonical commutation relations}: 
 \begin{align*}
      W(\bu)W(\bv) &= \exp\{-i \Im \braket{\bu}{\bv}\}W(\bu+\bv),  \qquad \forall~ \bu, \bv \in \BC^d; \\
      W(\bu)W(\bv) &= \exp\{-2i \Im \braket{\bu}{\bv}\}W(\bv)W(\bu),  \qquad \forall~ \bu, \bv \in \BC^d.
 \end{align*}
%  which makes the map $\BC^d\ni\bu\mapsto W(\bu)\in\GC^d$ a projective unitary representation with 
%  \begin{equation}
%       W(\bu)^{\dagger}=W(-\bu).
% \end{equation}  
The map $\bu\mapsto W(\bu)$, $\bu\in \BC^d$ is called the \emph{Weyl representation} of the additive group $\BC^d$. Now
write $d=d_1+d_2$, where $d_1,d_2\in\mathbb{N}$. Then  $\BC^d= \BC^{d_1}\oplus \BC^{d_2}$, and the map given by   $\ket{e(\bv_1\oplus\bv_2)}\mapsto \ket{e(\bv_1)}\otimes \ket{e(\bv_2)},$ for all $\bv_1\in \BC^{d_1}, \bv_2\in \BC^{d_2}$ extends to a Hilbert space isomorphism between $\Gamma(\BC^d)$ and $\Gamma(\BC^{d_1})\otimes \Gamma(\BC^{d_2})$. Under this isomorphism, we identify these two Fock spaces and write $\Gamma(\BC^d)= \Gamma(\BC^{d_1})\otimes \Gamma(\BC^{d_2})$
 and we have $W(\bu_1\oplus\bu_2)=   W(\bu_1)\otimes W(\bu_2),$ for all $\bu_1\in\BC^{d_1},\bu_2\in\BC^{d_2}$
 called the \textit{factorizability} of the Weyl representation. Finally, the map $\bu\mapsto W(\bu)$, $\bu\in \BC^d$ is a  strongly continuous, projective, factorizable and irreducible unitary representation of the additive abelian group $\BC^d$.

 Another important class of unitary operators we need in this article consists of  the so called \textit{Bogoliubov transformations} on $\GCd$. 
  Given a symplectic transformation $L$ on $\BC^d$,
 there exists a unique unitary  $\GL$ on $\GCd$, called the \textit{Bogoliubov unitary} at $L$,  satisfying  $\GL W(\bu) \GL^{\dagger} = W(L\bu),$
  and $\mel{e(0)}{\GL}{e(0)}>0.$ In this case, we have $\Gamma(L)^{\dagger}=\Gamma(L^{-1})$. We refer the reader to  \cite[Chapter 2]{Krp92} and \cite[Section IV]{TiKR21} for more details on the topics  mentioned in this sub section.

\subsection{Gaussian States}
In this section, we define Gaussian states  and describe their properties which we need in a basis independent way. This  will help us provide rigorous  proofs later.
 %Recall that, we identify $\BC^d=\BR^{2d}$ via the identification $\x+i\y\mapsto (\x,\y)$. Thus, we often refer to Weyl operators as acting on elements of $\BR^{2d}$.
\begin{defn}
    Let $\rho$ be a state on $\GCd$. Then the map $\widehat{\rho}:\BC^{d}\rightarrow \BC$ defined by  
    \begin{equation}\label{eq:10.1}
    \end{equation}
    is called the \textit{quantum characteristic function} of $\rho$. If there exist a vector $\m \in \BC^d$ and a real linear operator $S$ on $\BC^{d}$ such that 
    \begin{equation}
     \widehat{\rho}(\z) =  \exp{-i\re\braket{\m}{\z}- \frac{1}{2}\re \mel{\z}{S}{\z}}, \qquad \forall~ \z\in \BC^d,
    \end{equation} 
    then $\rho$ is called a \textit{$d$-mode quantum Gaussian state} or simply a \textit{Gaussian state}. In this case, the pair $(\m,S)$ is unique;  the vector $\m$ is called the mean vector and $S$ is called the covariance operator of the Gaussian state $\rho$; and we  write $\rho = \rho_{(\m,S)}$.
\end{defn}

 Let $\rho$ be a Gaussian state with  mean vector $\m$ and covariance operator $S$.
     If $V$ denotes an identification of $\BC^d$ with $\BR^{2d}$ as described in Section \ref{sec:prelim},  then the quantum characteristic function of  $\rho$ can be considered as  the map $\widehat{\rho}:\BR^{2d}\rightarrow \BC$ given by  
    % \begin{equation*}%\label{eq:10.1}
    % \widehat{\rho}(\x,\y) := \tr (\rho W(\x,\y)), \qquad \forall~ \x,\y \in \BR^{d}.
    % \end{equation*} 
    \begin{equation}\label{eq-GS-char-fun-real}
     \widehat{\rho}(V\z) =  \exp{-i(V\m)^T(V\z)- \frac{1}{2}(V\z)^T(VSV^T)(V\z)}, \qquad \forall~ \z \in \BC^d,
    \end{equation} 
    where $VSV^T\in M_{2d}(\BR)$. When the identification is clear from the context, we abuse the notation and write  $\m\in \BR^{2d}$ and $S\in M_{2d}(\BR)$. Furthermore, an arbitrary matrix $S\in M_{2d}(\BR)$ is the covariance matrix of a Gaussian state if and only if it belongs to the set 
    \begin{equation*}
     \msc{CM}(d):=\{A\in M_{2d}(\BR): A+iJ\geq 0\},
    \end{equation*} where $J$ is taken as $J_{2d}$ or $J_{2d_1}\oplus J_{2d_2}$ according to the identification we have chosen. For more details on quantum Gaussian states, we refer to \cite{Parthasarathy2010-su}, and for the basis-independent description of Gaussian states, see \cite{john2018infinite}. The following proposition is a consequence of the definitions.
 
 \begin{prop}\label{sec:quant-gauss-stat-1}
    Let $\rho = \rho_{(\m, S)}$ be a Gaussian state on $\GCd$. 
    If $\bu\in \BC^{d}$ and $L$ is a symplectic transformation of $\BC^d$, then 
    \begin{align*}%\label{eq:201}
      W(\bu)\GL\rho_{(\m, S)}\GL^\dagger W(\bu)^\dagger = \rho_{((L^{-1})^T\m-i2\bu, (L^{-1})^TSL^{-1})}.
    \end{align*}
    In particular, if $L$ is a unitary operator then 
    \begin{equation*}%\label{eq:ortho-symplectic}
         W(\bu)\Gamma(L)\rho_{(\m, S)} \Gamma(L)^{\dagger}W(\bu)^{\dagger}= \rho_{(L\m-i2\bu, LSL^T)}.
    \end{equation*}
 \end{prop}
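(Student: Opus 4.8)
The plan is to identify the transformed state through its quantum characteristic function and then read off the new mean and covariance using the uniqueness of the pair $(\m,S)$. Write $\sigma := W(\bu)\GL\rho_{(\m,S)}\GL^\dagger W(\bu)^\dagger$. Since a Gaussian state is uniquely determined by its characteristic function $\widehat{\sigma}(\z)=\tr(\sigma W(\z))$, it suffices to show that $\widehat{\sigma}$ again has the Gaussian form, with mean $(L^{-1})^T\m - i2\bu$ and covariance $(L^{-1})^T S L^{-1}$.

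First I would use cyclicity of the trace to shift $\GL^\dagger W(\bu)^\dagger$ and $W(\bu)\GL$ onto $W(\z)$, obtaining $\widehat{\sigma}(\z)=\tr\big(\rho\,\GL^\dagger W(\bu)^\dagger W(\z) W(\bu)\GL\big)$. Two elementary reductions then collapse the inner operator into a scalar multiple of a single Weyl operator. From the Weyl form of the canonical commutation relations one computes $W(\bu)^\dagger W(\z) W(\bu) = e^{2i\Im\braket{\bu}{\z}} W(\z)$, a pure phase. From the defining intertwining relation $\GL W(\bv)\GL^\dagger = W(L\bv)$ together with $\GL^\dagger=\Gamma(L^{-1})$, one gets $\GL^\dagger W(\z)\GL = \Gamma(L^{-1})W(\z)\Gamma(L^{-1})^\dagger = W(L^{-1}\z)$. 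Combining these yields the clean identity
\begin{equation*}
 \widehat{\sigma}(\z) = e^{2i\Im\braket{\bu}{\z}}\,\widehat{\rho}(L^{-1}\z).
\end{equation*}

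It then remains to expand the right-hand side and match it to the Gaussian template. Substituting the Gaussian form of $\widehat{\rho}$ produces an exponent $2i\Im\braket{\bu}{\z} - i\re\braket{\m}{L^{-1}\z} - \tfrac12\re\mel{L^{-1}\z}{S}{L^{-1}\z}$. Using the real-adjoint relation $\re\braket{A^T\x}{\y}=\re\braket{\x}{A\y}$ for the real inner product $\re\braket{\cdot}{\cdot}$, the quadratic term rewrites as $\tfrac12\re\mel{\z}{(L^{-1})^T S L^{-1}}{\z}$ and the term $\re\braket{\m}{L^{-1}\z}$ becomes $\re\braket{(L^{-1})^T\m}{\z}$. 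Finally the phase is absorbed into the linear term through the identity $\re\braket{i2\bu}{\z}=2\Im\braket{\bu}{\z}$, so that $2i\Im\braket{\bu}{\z}-i\re\braket{(L^{-1})^T\m}{\z} = -i\re\braket{(L^{-1})^T\m - i2\bu}{\z}$. This is exactly the Gaussian characteristic function of $\rho_{((L^{-1})^T\m-i2\bu,\,(L^{-1})^T S L^{-1})}$, which proves the first claim.

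For the particular case I would note that a complex-linear unitary $L$ is, in particular, a symplectic transformation (it preserves $\braket{\cdot}{\cdot}$ and hence its imaginary part), so the general formula applies; and that, viewed as a real operator, its real adjoint coincides with the complex adjoint, giving $L^T = L^\dagger = L^{-1}$. Substituting $(L^{-1})^T = L$ and $L^{-1} = L^T$ into the general formula yields the mean $L\m - i2\bu$ and covariance $L S L^T$. I do not expect a genuine obstacle here, as the computation is routine; the only real care required is the bookkeeping of the two inner-product structures—keeping the real adjoint $T$ distinct from the complex adjoint $\dagger$ and tracking $\re$ versus $\Im$ correctly—so that the displacement phase lands precisely in the mean vector while the symplectic conjugation lands in the covariance operator.
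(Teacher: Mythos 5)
Your proof is correct, and it is essentially the argument the paper itself relies on: the paper states this proposition as "a consequence of the definitions" without writing it out, but the identical characteristic-function computation (cyclicity of the trace, the Weyl commutation phase $e^{2i\Im\braket{\bu}{\z}}$, and the intertwining $\Gamma(L)^\dagger W(\z)\Gamma(L)=W(L^{-1}\z)$) appears verbatim in the proof of Proposition \ref{prop-mean-covar-gaussian symmetry}. Your bookkeeping of the real versus complex adjoints and of the phase-to-mean conversion $\re\braket{i2\bu}{\z}=2\Im\braket{\bu}{\z}$ is accurate, so nothing further is needed.
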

 
 \begin{rmk}\label{rmk:quant-gauss-stat-1}
    Let $\rho = \rho_{(\m, S)}$ be a Gaussian state on $\GCd$. 
    If $\bu\in \BR^{2d}$ and $L\in Sp(2d,J)$, then 
    \begin{align*}%\label{eq:201}
      W(\bu)\GL\rho_{(\m, S)}\GL^\dagger W(\bu)^\dagger = \rho_{((L^{-1})^T\m+2J\bu, (L^{-1})^TSL^{-1})}.
    \end{align*}
    In particular, if $L$ is a $J$-orthosymplectic matrix then 
    \begin{equation*}%\label{eq:ortho-symplectic}
         W(\bu)\Gamma(L)\rho_{(\m, S)} \Gamma(L)^{\dagger}W(\bu)^{\dagger}= \rho_{(L\m+2J\bu, LSL^T)}.
    \end{equation*}
 \end{rmk}
 
\begin{defn}
    A unitary operator $U$ on $\GCd$ is called a \textit{Gaussian unitary operator} (or \textit{Gaussian symmetry}) if $U\rho U^{\dagger}$ is a Gaussian state for every Gaussian state $\rho$ on $\GCd$.
\end{defn}

 %A complete characterization of Gaussian unitaries on Fock spaces over any separable complex Hilbert space is known. 

\begin{prop}[\cite{KRP12, john2018infinite}] \label{prop:gaussian-unitary}
    Let $U$ be a unitary operator on $\GCd$. Then $U$ is a Gaussian unitary operator if and only if $ U=\lambda W(\bu)\Gamma(L)$
    for some $\lambda \in \BC$ with $\abs{\lambda}= 1$, $\bu\in \BC^{d}$ and some symplectic transformation $L$ on $\BC^{d}$. 
\end{prop}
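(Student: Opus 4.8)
The plan is to treat the two implications separately: the forward (``if'') direction is an immediate consequence of Proposition \ref{sec:quant-gauss-stat-1}, while the converse carries the real content. For the ``if'' direction, suppose $U=\lambda W(\bu)\GL$ with $\abs{\lambda}=1$. Then for any Gaussian state $\rho_{(\m,S)}$ the phase cancels, $U\rho_{(\m,S)}U^\dagger=W(\bu)\GL\rho_{(\m,S)}\GL^\dagger W(\bu)^\dagger$, and Proposition \ref{sec:quant-gauss-stat-1} identifies this as the Gaussian state $\rho_{((L^{-1})^T\m-i2\bu,\,(L^{-1})^TSL^{-1})}$. Hence $U$ is Gaussian.

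For the ``only if'' direction, assume $U$ is Gaussian and set $V(\z):=UW(\z)U^\dagger$ for $\z\in\BC^d$. I would first record that $V$ is again a strongly continuous irreducible Weyl system with the same canonical commutation relations: strong continuity and the relation $V(\z)V(\w)=\exp\{-i\im\braket{\z}{\w}\}V(\z+\w)$ pass through conjugation verbatim, and irreducibility holds because the commutant of $\{V(\z):\z\in\BC^d\}$ equals $U\{W(\z):\z\in\BC^d\}'U^\dagger=\BC I$. The heart of the argument is then the claim that \emph{conjugation by a Gaussian unitary sends each Weyl operator to a phase multiple of a Weyl operator}: there exist a real-linear symplectic transformation $T$ of $\BC^d$ and a real-linear functional $\ell$ with $V(\z)=e^{i\ell(\z)}W(T\z)$.

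Granting the claim, the proof concludes cleanly. Write $\ell(\z)=-2\im\braket{\bu}{\z}$ for the unique $\bu\in\BC^d$ representing $\ell$ through the nondegenerate symplectic pairing. Using $\GL W(\z)\GL^\dagger=W(L\z)$ with $L=T^{-1}$, one computes $\Gamma(T^{-1})V(\z)\Gamma(T^{-1})^\dagger=e^{i\ell(\z)}W(\z)$, so $P:=\Gamma(T^{-1})U$ satisfies $PW(\z)P^\dagger=e^{i\ell(\z)}W(\z)$. Comparing with the identity $W(\bu)W(\z)W(\bu)^\dagger=e^{-2i\im\braket{\bu}{\z}}W(\z)$ shows that $W(\bu)^\dagger P$ commutes with every $W(\z)$; by irreducibility and Schur's lemma this operator is a scalar $\lambda$ with $\abs{\lambda}=1$. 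Rearranging, and using $\Gamma(T^{-1})^\dagger=\Gamma(T)$ together with $\Gamma(T)W(\bu)=W(T\bu)\Gamma(T)$, gives $U=\lambda W(T\bu)\Gamma(T)$, which is exactly the desired form.

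The main obstacle is therefore the claim itself. My strategy would be to first dispose of the displacement: since $U\ket{e(0)}\bra{e(0)}U^\dagger$ is a pure Gaussian state, and every pure Gaussian state lies in the orbit of the vacuum under the operators $W(\bu)\GL$, one may pre-compose with a suitable $U_0=W(\bu_0)\Gamma(L_0)$ so that $U_1:=U_0^{-1}U$ is a Gaussian unitary fixing the vacuum vector up to a phase. It then remains to show that such a vacuum-fixing Gaussian unitary linearizes the fields. For this I would pass to the self-adjoint field operators $B(\z)$ determined via Stone's theorem from the one-parameter groups $\z\mapsto W(t\z)$, and analyze $U_1^\dagger B(\z)U_1$: preservation of Gaussianity forces the conjugated fields to have a jointly Gaussian distribution in every Gaussian state, and the delicate point is to deduce from this that $U_1^\dagger B(\z)U_1$ is a real-linear combination of the original fields $\{B(\w)\}$. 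Once this linearity is in hand, the induced map preserves the commutator and is hence symplectic, and exponentiating recovers $V(\z)=e^{i\ell(\z)}W(T\z)$. I expect the linearization of the conjugated fields to be the most technical step, requiring careful control of second moments and an appeal to the uniqueness of the mean–covariance pair in the definition of a Gaussian state.
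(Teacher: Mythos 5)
The paper does not actually prove this proposition; it is quoted from \cite{KRP12, john2018infinite}, so there is no in-paper argument to compare against. Judged on its own terms, your proposal has the right architecture --- and it is essentially the architecture of the cited sources: reduce to the statement that conjugation by a Gaussian unitary carries each Weyl operator to a phase multiple of a Weyl operator along a symplectic map, then use irreducibility of the Weyl representation and Schur's lemma to peel off the scalar. The ``if'' direction via Proposition \ref{sec:quant-gauss-stat-1}, the identification $\ell(\z)=-2\im\braket{\bu}{\z}$ through the nondegenerate symplectic pairing, and the final bookkeeping $U=\lambda W(T\bu)\Gamma(T)$ using $\Gamma(T)W(\bu)=W(T\bu)\Gamma(T)$ are all correct.

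The genuine gap is that the pivotal claim --- $UW(\z)U^{\dagger}=e^{i\ell(\z)}W(T\z)$ for some real-linear $\ell$ and symplectic $T$ --- is precisely the content of the theorem, and you do not prove it; you only outline a strategy. Your reduction to a vacuum-fixing Gaussian unitary $U_1$ is legitimate (pure Gaussian states do form a single orbit of the vacuum under $\{W(\bu)\Gamma(L)\}$, by Parthasarathy's characterization of pure Gaussian states), but the step ``preservation of Gaussianity forces $U_1^{\dagger}B(\z)U_1$ to be a real-linear combination of the fields $B(\w)$'' is exactly where the work lies, and it is not a routine consequence of second-moment bookkeeping: the conjugated fields are unbounded operators with domain issues, and knowing that every Gaussian state is mapped to a Gaussian state constrains only the collection of characteristic functions $\tr(\rho\, U_1W(\z)U_1^{\dagger})$ over Gaussian $\rho$; extracting from this that $U_1W(\z)U_1^{\dagger}$ is itself a scaled Weyl operator (rather than, say, some other operator with Gaussian expectation in every Gaussian state) requires a genuine argument, e.g.\ the kernel computation with exponential vectors carried out in \cite{KRP12}. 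There is also a smaller unaddressed point: even granting $V(\z)=e^{i\ell(\z)}W(T\z)$ pointwise, you need continuity of $\z\mapsto V(\z)$ in a suitable topology to conclude that $T$ is real-linear and $\ell$ is additive rather than additive only modulo $2\pi$. As written, the proposal is a correct skeleton with the load-bearing step missing.
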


\section{Gaussian Channels}\label{sec:channels}

 This section discusses quantum channels defined on Fock spaces. Mathematically, quantum channels are represented by trace-preserving completely positive maps (in the Schrödinger picture), or equivalently, unital completely positive maps (in the Heisenberg picture). To ensure clarity, we will provide a brief overview of each term used in this context.
 
 A linear map $\Phi:\B{\GCd}\to\B{\GCd}$ is said to be \textit{completely positive} (CP) if the positivity condition $\sum_{j,k=1}^nB_j^*\Phi(A_j^*A_k)B_k\geq 0$ is satisfied for any finite subsets  $\{A_j\}_{j=1}^n,\{B_j\}_{j=1}^n\subseteq\B{\GCd}$ and $n\in\mathbb{N}$. We refer to \cite{Sti55,Pau02} for more details on CP maps and its properties. Given any bounded (with respect to the trace-norm) linear map $\Psi:\TC{\GCd}\to\TC{\GCd}$ there exists a unique bounded (with respect to the operator-norm) linear map  $\Psi^*:\B{\GCd}\to\B{\GCd}$, called the \textit{dual} of $\Psi$,  such that $\tr(\Psi(\varrho)A)=\tr(\varrho\Psi^*(A))$ for all $\varrho\in\TC{\GCd}$ and $A\in\B{\GCd}$. The map $\Psi^*$ is necessarily a \textit{normal} map, i.e., continuous with respect to the $\sigma$-weak (or weak$^*$) topology on $\B{\GCd}$.  The map $\Psi$ is said to be \textit{completely positive} if $\Psi^*$ is a completely positive map. 
 %Note that $\Psi$ is positive if and only if $\Psi(\rho)\geq 0$ for all states $\rho$ on $\GCd$.
 Furthermore, $\Psi$ is \textit{trace-preserving} (i.e., $\tr(\Psi(\varrho))=\tr(\varrho)$ for all $\varrho\in\TC{\GCd}$) if and only if $\Psi^*$ is \textit{unital}  (i.e., $\Psi^*(I)=I$, where $I$ is the identity map of $\GCd$). A trace-preserving completely positive map  $\Psi:\TC{\GCd}\to\TC{\GCd}$ is called a \textit{quantum channel} on $\GCd$. According to the Stinespring representation theorem \cite{Sti55, AttQ13}, if $\Psi$ is a quantum channel on $\GCd$, then there exists a separable Hilbert space $\mathcal{K}$, a unitary operator $U$ on $\BC^d\otimes\mathcal{K}$, and a state $\rho_{\CK}$ on $\mathcal{K}$ such that the channel acts on an operator $\varrho\in\TC{\GCd}$ as follows: 
 \begin{align*}
     \Psi(\varrho)=\tr_2(U(\varrho\otimes\rho_{\CK})U^\dagger).
 \end{align*}
 In this section, we aim to characterize all quantum channels that map Gaussian states to Gaussian states. % through their Stinespring representation. From a mathematical perspective, we prefer to refer to these channels as "quantum Gaussian channels." However, it is important to note that there are other channels in the literature (\cite{})\todo{SK: list a few} that use the same terminology but have different characterizations. The primary objective of this section is to explore the equivalence among these three definitions of channels, all of which are designated with the term "quantum Gaussian channels." 

\begin{proposition}\label{prop-mean-covar-gaussian symmetry}
     Let $d=d_1+d_2$, $\bu=\bu_1\oplus \bu_2\in\mbb{R}^{2d_1}\oplus\mbb{R}^{2d_2}$ and $L\in Sp(2d, J)$  with $L^{-1}=\bmqty{L_{11}&L_{12}\\L_{21}&L_{22}}$, $L_{jj}\in M_{2d_j}(\BR)$, $j=1,2$.  Then the map $\Psi:\TC{\Gamma(\mbb{C}^{d_1})}\to \TC{\Gamma(\mbb{C}^{d_1})}$ defined by 
    \begin{align}\label{eq-GQC-stinespring}
        \Psi(\varrho):=\tr_{2}\left(W(\bu)\Gamma(L)(\varrho\otimes \rho_{(0,I_{d_2})})(W(\bu)\Gamma(L))^{\dagger}\right), \qquad \forall~ \varrho\in\TC{\Gamma_s(\mbb{C}^{d_1})},
    \end{align}
    is a quantum channel that maps Gaussian states to Gaussian states. Indeed,  
    \begin{align}\label{eq-GQC-stinespring-gaussian}
        \Psi(\rho_{(\m,S)})=\rho_{(L_{11}^{T}\m+2J_{2d_1}\bu_1, L_{11}^{T}SL_{11}+L_{21}^{T}L_{21})},
    \end{align}
    for every Gaussian states $\rho_{(\m,S)}$ on $\Gamma(\mbb{C}^{d_1})$.
\end{proposition}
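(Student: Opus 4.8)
The plan is to handle the two assertions separately. That $\Psi$ is a quantum channel is immediate from its form: $W(\bu)\GL$ is unitary on $\GCd=\Gamma(\BC^{d_1})\otimes\Gamma(\BC^{d_2})$, and $\rho_{(0,I_{d_2})}$ is a genuine state on $\Gamma(\BC^{d_2})$ because its covariance $I_{2d_2}$ satisfies $I_{2d_2}+iJ_{2d_2}\geq 0$, i.e.\ $I_{2d_2}\in\msc{CM}(d_2)$. Thus \eqref{eq-GQC-stinespring} is exactly the Stinespring expression $\tr_2(U(\varrho\otimes\rho_{\CK})U^\dagger)$ discussed above, with $U=W(\bu)\GL$ and $\rho_{\CK}=\rho_{(0,I_{d_2})}$, so $\Psi$ is trace-preserving and completely positive. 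Everything then reduces to computing the image of an arbitrary Gaussian state $\rho_{(\m,S)}$, which I would obtain by tracking the mean and covariance through three elementary operations, working throughout with the identification $\BC^d=\BR^{2d_1}\oplus\BR^{2d_2}$ and $J=J_{2d_1}\oplus J_{2d_2}$.

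The three steps are as follows. (i) By factorizability of the Weyl representation, $W(\z_1\oplus\z_2)=W(\z_1)\otimes W(\z_2)$, the quantum characteristic function is multiplicative over tensor products, so comparing with the Gaussian form identifies $\rho_{(\m,S)}\otimes\rho_{(0,I_{d_2})}$ with the Gaussian state $\rho_{(\m\oplus 0,\,S\oplus I_{2d_2})}$ on $\GCd$. (ii) By Remark \ref{rmk:quant-gauss-stat-1}, applied on $\GCd$ with $\bu=\bu_1\oplus\bu_2\in\BR^{2d}$ and $L\in Sp(2d,J)$, conjugation of this state by $W(\bu)\GL$ yields the Gaussian state $\rho_{(\mathbf{n},T)}$ with $\mathbf{n}=(L^{-1})^T(\m\oplus 0)+2J\bu$ and $T=(L^{-1})^T(S\oplus I_{2d_2})L^{-1}$. (iii) For any Gaussian state $\rho_{(\mathbf{n},T)}$ on $\GCd$, the reduced state $\tr_2(\rho_{(\mathbf{n},T)})$ is Gaussian with mean and covariance given by the top-left blocks $\mathbf{n}_1$ and $T_{11}$: using the defining relation of the partial trace together with the identity $W(\z_1)\otimes I=W(\z_1\oplus 0)$ one gets $\widehat{\tr_2(\rho_{(\mathbf{n},T)})}(\z_1)=\widehat{\rho_{(\mathbf{n},T)}}(\z_1\oplus 0)$, and substituting $\z=\z_1\oplus 0$ into the Gaussian characteristic function retains precisely the $(1,1)$ blocks of $\mathbf{n}$ and $T$, whence the claim by uniqueness of the characteristic function.

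It remains to read off the blocks. With $L^{-1}=\bmqty{L_{11}&L_{12}\\L_{21}&L_{22}}$ we have $(L^{-1})^T=\bmqty{L_{11}^T&L_{21}^T\\L_{12}^T&L_{22}^T}$, so the first block of $\mathbf{n}=(L^{-1})^T(\m\oplus 0)+2J\bu$ is $L_{11}^T\m+2J_{2d_1}\bu_1$, while a block multiplication gives the $(1,1)$ entry of $T=(L^{-1})^T(S\oplus I_{2d_2})L^{-1}$ as $L_{11}^TSL_{11}+L_{21}^TL_{21}$. Feeding these into (iii) gives exactly \eqref{eq-GQC-stinespring-gaussian}.

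The substantive point is (iii): that the partial trace of a Gaussian state is the Gaussian state carrying the appropriate diagonal blocks of the mean and covariance. The clean mechanism behind it is the identity $W(\z_1)\otimes I=W(\z_1\oplus 0)$, which turns the reduced characteristic function into a literal restriction of the original; once this is in place, steps (i) and (ii) are just Remark \ref{rmk:quant-gauss-stat-1} and factorizability, and the remaining block arithmetic is routine. I therefore expect (iii), and keeping its conventions consistent with the chosen $J=J_{2d_1}\oplus J_{2d_2}$, to be the only place requiring real care.
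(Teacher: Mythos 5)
Your proposal is correct and is in substance the same argument as the paper's: the paper establishes \eqref{eq-GQC-stinespring-gaussian} by one chain of equalities on the characteristic function $\tr\big(\Psi(\rho_{(\m,S)})W(\bv)\big)$, using exactly your three ingredients --- the partial-trace identity $\tr(\tr_2(A)W(\bv))=\tr(A\,W(\bv\oplus 0))$, the Weyl/Bogoliubov conjugation rule, and factorizability --- whereas you track the state forward through tensoring, conjugation (via Remark \ref{rmk:quant-gauss-stat-1}), and marginalization, ending with the same block arithmetic. The reorganization is sound and the $(1,1)$-block computations match the paper's formula exactly.
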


\begin{proof}
    Since $U:=W(\bu)\Gamma(L)$ is a unitary, by Stinespring representation theorem, we know that $\Psi$ is a quantum channel. Further, given any Gaussian state $\rho_{(\m,S)}$ the quantum characteristic function of $\Psi(\rho_{\m,S})$ at any $\bv\in \mbb{R}^{2d_1}$ is given by 
    \begin{align*}
    \widehat{\Psi(\rho_{(\m,S)})}(\bv)
          &= \tr\left(\tr_{2}\left(W(\bu)\Gamma(L)\big(\rho_{(\m,S)}\otimes \rho_{(0,I_{d_2})}\big)(W(\bu)\Gamma(L))^{\dagger}\right)W(\bv)\right)\\
          %&=\tr\left(W(\bu)\Gamma(L)\big(\rho_{(\m,S)}\otimes \rho_{(0,I_{d_2})}\big)(W(\bu)\Gamma(L))^{\dagger}\left(W(\bv)\otimes W(\bm{0})\right)\right)\\
          &=\tr\left(\big(\rho_{(\m,S)}\otimes \rho_{(0,I_{d_2})}\big)\Gamma(L)^{\dagger}W(-\bu)W(\bv\oplus\bm{0})W(\bu)\Gamma(L)\right)\\
          %&=\tr\left(\big(\rho_{(\m,S)}\otimes \rho_{(0,I_{d_2})}\big)\Gamma(L^{-1})W(\bv\oplus \bm{0})W(-\bu)W(\bu) \Gamma(L^{-1})^{\dagger}\right)\exp{2i\Im\braket{\bu_1}{\bv}}\\
          &=\tr\left(\big(\rho_{(\m,S)}\otimes \rho_{(0,I_{d_2})}\big)\Gamma(L^{-1})W(\bv\oplus \bm{0}) \Gamma(L^{-1})^{\dagger}\right)\exp{2i\Im\braket{\bu_1}{\bv}}\\
          &=\tr\left(\big(\rho_{(\m,S)}\otimes \rho_{(0,I_{d_2})}\big)W(L^{-1}(\bv\oplus \bm{0}))\right)\exp{2i\bu_1^TJ_{2d_1}\bv}\\
          &=\tr\left(\big(\rho_{(\m,S)}\otimes \rho_{(0,I_{d_2})}\big)W(L_{11}\bv\oplus L_{21} \bv)\right)\exp{2i\bu_1^TJ_{2d_1}\bv}\\
          %&=\tr\left(\big(\rho_{(\m,S)}\otimes \rho_{(0,I_{d_2})}\big)\big(W(L_{11}\bv)\otimes W(L_{21}\bv)\big)\right)\exp{2i\bu_1^TJ_{2d_1}\bv}\\
          &=\tr\left(\rho_{(\m,S)}W(L_{11}\bv)\right) \tr\left( \rho_{(0,I_{d_2})}W(L_{21}\bv)\right)\exp{2i\bu_1^TJ_{2d_1}\bv}\\
          %&=\exp{-i \m^TL_{11}\bv -\frac{1}{2}\bv^TL_{11}^TSL_{11}\bv}  \exp{-\frac{1}{2}\bv^TL_{21}^TL_{21}\bv} \exp{2i\bu_1^TJ_{2d_1}\bv}\\
          &=\exp{-i \left(L_{11}^T\m+2J_{2d_1}\bu_1\right)^T\bv -\frac{1}{2}\bv^T\left(L_{11}^TSL_{11}+L_{21}^TL_{21}\right)\bv}.
    \end{align*}
    Since $\bv\in \mbb{R}^{2d_1}$ is arbitrary we conclude that $\Psi(\rho_{(\m,S)})$ is a Gaussian state with mean  $L_{11}^T\m+2J_{2d_1}\bu_1$ and covariance $L_{11}^TSL_{11}+L_{21}^TL_{21}$.
\end{proof}

 To achieve our main goal, we will first establish some technical results.

\begin{lemma}\label{lem-intertwing-J and D-real}
    Let $\lambda_j\in [-1,1]$ for $1\leq j\leq d$. Then there exists $Q\in M_{2d\times 4d}(\mbb{R})$ such that $QQ^T=I_{2d}$ and
    \begin{align*}
        Q\Big(\bigoplus_{j=1}^{2d}\bmqty{0&1\\-1&0}\Big)Q^T=\bigoplus_{j=1}^d\bmqty{0&\lambda_j\\-\lambda_j&0}.
    \end{align*}
\end{lemma}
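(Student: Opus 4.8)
The plan is to exploit the fact that both the source form $\bigoplus_{j=1}^{2d}\bmqty{0&1\\-1&0}$ and the target $\bigoplus_{j=1}^d\bmqty{0&\lambda_j\\-\lambda_j&0}$ are block diagonal, and to build $Q$ block-diagonally so that the whole statement reduces to a single scalar construction. Concretely, regroup the $2d$ elementary blocks on the left into $d$ consecutive pairs, so that the source form is $\bigoplus_{j=1}^d K$, where $K:=\bmqty{0&1\\-1&0}\oplus\bmqty{0&1\\-1&0}\in M_4(\BR)$. It then suffices to produce, for each $j$, a rectangular block $q_j\in M_{2\times 4}(\BR)$ with $q_jq_j^T=I_2$ and $q_j K q_j^T=\bmqty{0&\lambda_j\\-\lambda_j&0}$, and to set $Q:=\bigoplus_{j=1}^d q_j$. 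Since $Q$ is block diagonal with $2\times 4$ blocks, it has the required size $2d\times 4d$, and $QQ^T=\bigoplus_j q_jq_j^T=I_{2d}$, while $Q\big(\bigoplus_j K\big)Q^T=\bigoplus_j q_jKq_j^T$ is exactly the desired target.

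For the single-block problem I would simply write down an explicit solution rather than argue abstractly. Set
\[
    q_j:=\bmqty{\lambda_j & 0 & \sqrt{1-\lambda_j^2} & 0 \\ 0 & 1 & 0 & 0}\in M_{2\times 4}(\BR).
\]
The two rows are clearly orthogonal, the second has norm $1$, and the first has norm $\lambda_j^2+(1-\lambda_j^2)=1$, so $q_jq_j^T=I_2$. For the symplectic pairing, a direct computation gives $K\,(0,1,0,0)^T=(1,0,0,0)^T$, whence the $(1,2)$-entry of $q_jKq_j^T$ is the inner product of the first row with $(1,0,0,0)^T$, namely $\lambda_j$; antisymmetry of $q_jKq_j^T$ (inherited from $K$) forces the $(2,1)$-entry to be $-\lambda_j$ and both diagonal entries to vanish. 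Hence $q_jKq_j^T=\bmqty{0&\lambda_j\\-\lambda_j&0}$, as required.

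There is no serious obstacle here; the only point that must be respected is that the hypothesis $\lambda_j\in[-1,1]$ is exactly what makes $\sqrt{1-\lambda_j^2}$ real, which is what allows the first row of $q_j$ to be normalized to a unit vector while still achieving symplectic pairing $\lambda_j$ with the second row. Indeed this is a genuine necessity: since $QQ^T=I_{2d}$ makes $Q$ a (row-)contraction and the source form has operator values bounded by $1$ in each block, no block $q_j$ could produce a pairing of modulus exceeding $1$. The real content of the argument is thus just recognizing the correct low-dimensional embedding (two copies of the elementary form per output block); I would, if desired, remark that writing $\lambda_j=\cos\theta_j$ exhibits $q_j$ as coming from a rotation in the $(1,3)$-plane, linking the construction to the passive linear-optical (beam-splitter) transformations that are the theme of the paper.
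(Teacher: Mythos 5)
Your proof is correct and follows essentially the same route as the paper: both reduce the problem to a single $2\times 4$ block acting on a pair of elementary blocks $\bmqty{0&1\\-1&0}\oplus\bmqty{0&1\\-1&0}$ and then take a direct sum over $j$; the paper's block is $Q(\theta_j)=\bmqty{\cos\theta_j&0&-\sin\theta_j&0\\0&\cos\theta_j&0&\sin\theta_j}$ with $\cos 2\theta_j=\lambda_j$, while yours is the (equally valid, slightly more economical) variant keeping the second row as a standard basis vector. No gaps.
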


\begin{proof}
    For any $\theta\in\mbb{R}$, let 
    \begin{align*}
        Q(\theta):=\bmqty{\cos\theta&0&-\sin\theta&0\\0&\cos\theta&0&\sin\theta}.
    \end{align*}
    Then, we have $Q(\theta)Q(\theta)^T=I_2$, and
    \begin{align*}
        Q(\theta)\Big(\bigoplus_{j=1}^2\bmqty{0&1\\-1&0}\Big)Q(\theta)^T
         %&=\bmqty{0&\cos\theta&0&-\sin\theta\\-\cos\theta&0&-\sin\theta&0}\bmqty{\cos\theta&0\\0&\cos\theta\\-\sin\theta&0\\0&\sin\theta}\\
         &=\bmqty{0&\cos2\theta\\-\cos2\theta&0}.
    \end{align*}
    Now, for each $1\leq j\leq d$, choose and fix $\theta_j\in\mbb{R}$ be such that $\cos 2\theta_j=\lambda_j$. Then $Q:=\oplus_{j=1}^dQ(\theta_j)$ gives the required matrix. 
\end{proof}

\begin{cor}\label{cor-intertwing-J and D}
    Let $A\in M_{d}(\mbb{R})$ be a positive contraction. Then there exists a matrix $Q\in M_{2d\times 4d}(\mbb{R})$ such that $QQ^T=I_{2d}$ and $QJ_{4d}Q^T=\bmqty{0&A\\-A&0}$.
\end{cor}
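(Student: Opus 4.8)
The plan is to reduce the statement to the diagonal case already handled by Lemma \ref{lem-intertwing-J and D-real} and then transport that result through a few orthogonal conjugations. Two features separate the lemma from the corollary: the lemma produces the \emph{interleaved} symplectic form $\bigoplus_{j=1}^{2d}\bmqty{0&1\\-1&0}$ rather than the block form $J_{4d}=\bmqty{0&I_{2d}\\-I_{2d}&0}$, and its right-hand side is a direct sum of scalar $2\times 2$ blocks carrying the $\lambda_j$ rather than the single $2d\times 2d$ block $\bmqty{0&A\\-A&0}$. Both discrepancies will be removed by orthogonal (in fact, permutation) changes of coordinate, together with the spectral theorem applied to $A$.

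First I would diagonalize $A$. Being a real positive contraction, $A$ is symmetric with spectrum contained in $[0,1]$, so the spectral theorem furnishes an orthogonal $O\in M_d(\mbb{R})$ and $D=\diag(\lambda_1,\dots,\lambda_d)$ with each $\lambda_j\in[0,1]\subseteq[-1,1]$ and $A=ODO^T$. Applying Lemma \ref{lem-intertwing-J and D-real} to this tuple $(\lambda_1,\dots,\lambda_d)$ produces $Q_0\in M_{2d\times 4d}(\mbb{R})$ with $Q_0Q_0^T=I_{2d}$ and $Q_0\big(\bigoplus_{j=1}^{2d}\bmqty{0&1\\-1&0}\big)Q_0^T=\bigoplus_{j=1}^d\bmqty{0&\lambda_j\\-\lambda_j&0}$.

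Next I would fix the bookkeeping permutations. There is a permutation matrix $P\in M_{4d}(\mbb{R})$ that pairs coordinate $i$ with coordinate $2d+i$ into consecutive slots, so that $PJ_{4d}P^T=\bigoplus_{j=1}^{2d}\bmqty{0&1\\-1&0}$; likewise a permutation $P'\in M_{2d}(\mbb{R})$ with $P'\big(\bigoplus_{j=1}^d\bmqty{0&\lambda_j\\-\lambda_j&0}\big)(P')^T=\bmqty{0&D\\-D&0}$. Setting $\widetilde O:=O\oplus O\in M_{2d}(\mbb{R})$, which is orthogonal, gives $\widetilde O\,\bmqty{0&D\\-D&0}\,\widetilde O^T=\bmqty{0&ODO^T\\-ODO^T&0}=\bmqty{0&A\\-A&0}$.

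I would then set $Q:=\widetilde O\,P'\,Q_0\,P\in M_{2d\times 4d}(\mbb{R})$ and check the two required identities by chaining these facts. Since $P,P',\widetilde O$ are orthogonal and $Q_0Q_0^T=I_{2d}$, one has $QQ^T=\widetilde O P'(Q_0Q_0^T)(P')^T\widetilde O^T=I_{2d}$; and $QJ_{4d}Q^T=\widetilde O P'\,Q_0\,(PJ_{4d}P^T)\,Q_0^T(P')^T\widetilde O^T$ collapses successively to $\widetilde O P'\big(\bigoplus_{j=1}^d\bmqty{0&\lambda_j\\-\lambda_j&0}\big)(P')^T\widetilde O^T=\widetilde O\bmqty{0&D\\-D&0}\widetilde O^T=\bmqty{0&A\\-A&0}$, as wanted. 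The only genuine obstacle here is clerical, namely pinning down the two permutations and confirming that conjugation by them carries $J_{4d}$ to the interleaved form and the interleaved scalar blocks to $\bmqty{0&D\\-D&0}$; the mathematical content is supplied entirely by Lemma \ref{lem-intertwing-J and D-real} and the spectral theorem.
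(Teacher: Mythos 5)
Your proposal is correct and follows essentially the same route as the paper's own proof: diagonalize $A$ via the spectral theorem, apply Lemma \ref{lem-intertwing-J and D-real} to the eigenvalues, and conjugate by permutation matrices (and $O\oplus O$) to pass between the interleaved and block forms. The only difference is that you spell out the permutation bookkeeping explicitly, which the paper leaves as ``by taking conjugations with suitable permutation matrices.''
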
  

\begin{proof}
    By the spectral theorem we know that $A=UDU^T$ for some $D\in M_{2d}(\mbb{R})$ diagonal and $U\in M_{2d}(\mbb{R})$ orthogonal. Then 
    \begin{align*}
        \bmqty{0&A\\-A&0}=\bmqty{0&UDU^T\\-UDU^T&0}
                         =\bmqty{U&0\\0&U}\bmqty{0&D\\-D&0}\bmqty{U&0\\0&U}^T. 
    \end{align*}
    So it is enough to prove the result for $A$ diagonal, say $A=\diag(\lambda_1,\lambda_2,\cdots,\lambda_d)$ with $\lambda_j\in [0,1]$. Now, by the above Lemma, there exists a matrix $Q\in M_{2d\times 4d}(\mbb{R})$ such that $QQ^T=I_{2d}$ and
    \begin{align*}
        Q\Big(\bigoplus_{j=1}^{2d}\bmqty{0&1\\-1&0}\Big)Q^T=\bigoplus_{j=1}^d\bmqty{0&\lambda_j\\-\lambda_j&0}.
    \end{align*} 
    By taking conjugations with suitable permutation matrices, we obtain a matrix, again denoted by $Q\in M_{2d\times 4d}(\mbb{R})$ such that $QQ^T=I_{2d}$ and $QJ_{4d}Q^T=\bmqty{0&A\\-A&0}$.
\end{proof}

\begin{prop}\label{prop-symplectic-extension}
    Let $X,Y\in M_{2d}(\BR)$. Then the following are equivalent:
    \begin{enumerate}[label=(\roman*)]
        \item There exists  a $(J_{2d}\oplus J_{2d'})$-symplectic matrix  $L=\bmqty{X&L_{12}\\L_{21}&L_{22}}\in M_{2(d+d')}(\BR)$ with $ L_{21}^{T}L_{21}=Y$, where $L_{12}^{T},L_{21}\in M_{2d'\times 2d}(\BR), L_{22}\in M_{2d'}(\BR)$ for some $d'\geq 1$;
        \item $Y+i(J_{2d}-X^TJ_{2d}X)\geq 0$;%, where $J$ is as in \eqref{eq:4.kb3}.
        \item $Y-i(J_{2d}-X^TJ_{2d}X)\geq 0$.
    \end{enumerate}
    (In each of the above statements, the matrix $Y$ is necessarily positive.)
\end{prop}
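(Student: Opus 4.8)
The plan is to make the top-left block of the symplectic relation carry all the information. Write $\tilde{J} := J_{2d} \oplus J_{2d'}$ and set $M := J_{2d} - X^T J_{2d} X$, which is real and skew-symmetric. For a block matrix $L = \bmqty{X & L_{12} \\ L_{21} & L_{22}}$, expanding $L^T \tilde{J} L = \tilde{J}$ and reading off the $(1,1)$-block gives the identity $X^T J_{2d} X + L_{21}^T J_{2d'} L_{21} = J_{2d}$, equivalently $L_{21}^T J_{2d'} L_{21} = M$. This single relation, paired with $L_{21}^T L_{21} = Y$, is the hinge connecting (i) to (ii)--(iii).

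Next I would dispatch the easy implications and the parenthetical remark. Testing (ii) on a real vector $\z$ kills the skew part ($\z^T M \z = 0$), leaving $\z^T Y \z \geq 0$, so $Y \geq 0$; the same works for (iii), and the requirement that the Hermitian form be real-valued forces $Y = Y^T$. Since $Y$ and $M$ are real with $Y$ symmetric and $M$ skew, the matrix in (iii) is the transpose of the one in (ii), and a Hermitian matrix is positive iff its transpose is, giving (ii)$\Leftrightarrow$(iii). For (i)$\Rightarrow$(ii) and (i)$\Rightarrow$(iii), I would write $Y \pm iM = L_{21}^T(I \pm iJ_{2d'})L_{21}$; since $(iJ_{2d'})^2 = I$, the Hermitian matrix $iJ_{2d'}$ has spectrum $\{\pm 1\}$, so $I \pm iJ_{2d'} \geq 0$, and compressing a positive matrix by the real matrix $L_{21}$ (so that $L_{21}^T = L_{21}^\dagger$) preserves positivity.

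The substantive direction is (ii)/(iii)$\Rightarrow$(i), where the work is to construct $L_{21}$ and then complete it to a symplectic matrix. For the construction I would sidestep any normal-form machinery and factor the positive Hermitian matrix directly: write $Y + iM = C^\dagger C$ with $C = (Y+iM)^{1/2} \in M_{2d}(\mathbb{C})$, split $C = C_R + iC_I$ into real and imaginary parts, and set $L_{21} := \bmqty{C_R \\ C_I}$ together with $J_{2d'} := \bmqty{0 & I_{2d} \\ -I_{2d} & 0}$ (so $d' = 2d$). Matching real and imaginary parts of $C^\dagger C = Y + iM$ gives
\[ C_R^T C_R + C_I^T C_I = Y, \qquad C_R^T C_I - C_I^T C_R = M, \]
which are precisely $L_{21}^T L_{21} = Y$ and $L_{21}^T J_{2d'} L_{21} = M$. (Alternatively one can route through Corollary \ref{cor-intertwing-J and D} after a Williamson-type reduction of $M$ relative to $Y$, but the Cholesky split is cleaner and avoids separating out $\ker Y$.)

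Finally I would assemble $L$ from its first block column $K := \bmqty{X \\ L_{21}}$. The hinge identity says exactly $K^T \tilde{J} K = J_{2d}$, i.e. the $2d$ columns of $K$ form a symplectic set in $(\mathbb{R}^{2(d+d')}, \tilde{J})$; invoking the extension of a symplectic set to a symplectic basis \cite[Theorem 1.15]{Gosson2006-lr} for the form $\tilde{J}$, I adjoin $2d'$ further columns $\bmqty{L_{12} \\ L_{22}}$ to obtain a $\tilde{J}$-symplectic $L$. Since only new columns are appended, $X$ and $L_{21}$ remain the advertised blocks and $L_{21}^T L_{21} = Y$ holds by construction. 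I expect the genuine obstacle to be this middle step — manufacturing a single real $L_{21}$ that simultaneously realizes the symmetric datum $Y$ and the skew datum $M$; the Cholesky split is what makes it painless, after which the symplectic completion is routine once one notes that $\tilde{J} = J_{2d} \oplus J_{2d'}$ is a standard symplectic form in the sense of eq. \eqref{eq-sym-matrix-basis}.
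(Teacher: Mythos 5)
Your proof is correct, and the decisive step --- manufacturing a real $L_{21}$ with $L_{21}^TL_{21}=Y$ and $L_{21}^TJ_{4d}L_{21}=M$ where $M=J_{2d}-X^TJ_{2d}X$ --- is done by a genuinely different and cleaner route than the paper's. The paper first assumes $Y$ invertible, brings the skew-symmetric matrix $Y^{-1/2}MY^{-1/2}$ to $2\times 2$ block form via \cite[Corollary 2.5.11]{Horn2012-xx}, realizes that block form as a compression of $J_{4d}$ through an explicit rotation construction (Lemma \ref{lem-intertwing-J and D-real} and Corollary \ref{cor-intertwing-J and D}), and then removes the invertibility hypothesis by a compactness/limiting argument applied to $Y+\tfrac1n I$. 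Your Hermitian square root $C=(Y+iM)^{1/2}=C_R+iC_I$ with $L_{21}=\bmqty{C_R\\ C_I}$ yields both identities in one line (since $C^\dagger C=C^2=Y+iM$ forces $C_R^TC_R+C_I^TC_I=Y$ and $C_R^TC_I-C_I^TC_R=M$), handles singular $Y$ uniformly with no limiting argument, and renders the two auxiliary results unnecessary; what it gives up is only the explicit trigonometric form of $L_{21}$. The one place you are terser than you should be is the symplectic completion: the adjoined columns are not literally appended after the first $2d$ columns --- in the standard ordering for $J_{6d}$ the new $\bu$-type vectors sit between the old $\bu$'s and the old $\bv$'s, so one must conjugate by a permutation (the paper's matrix $P$) to land back in the block form $\bmqty{X&\ast\\ L_{21}&\ast}$. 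That is bookkeeping, not a gap. Your remaining implications --- $(i)\Rightarrow(ii),(iii)$ via $Y\pm iM=L_{21}^T(I\pm iJ_{2d'})L_{21}$ with $I\pm iJ_{2d'}\geq 0$, the equivalence $(ii)\Leftrightarrow(iii)$ by transposition, and $Y\geq 0$ by testing on real vectors --- coincide with the paper's.
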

 
\begin{proof}
    $(i)\Rightarrow (ii)$ Suppose there exist $L_{12}^{T},L_{21}\in M_{2d'\times 2d}(\BR), L_{22}\in M_{2d'}(\BR)$ such that  $ L_{21}^{T}L_{21}=Y$ and the matrix $L=\bmqty{X&L_{12}\\L_{21}&L_{22}}\in M_{2(d+d')}(\BR)$ is $(J_{2d}\oplus J_{2d'})$-symplectic. Then
    \begin{align*}
     \bmqty{X&L_{12}\\L_{21}&L_{22}}^T\bmqty{J_{2d}&0\\0&J_{2d'}}\bmqty{X&L_{12}\\L_{21}&L_{22}}=\bmqty{J_{2d}&0\\0&J_{2d'}}.
    \end{align*} 
    By comparing the $(1,1)$-entry of both sides, we obtain the relation
    \begin{align*}
      L_{21}^TJ_{2d'}L_{21}=J_{2d}-X^TJ_{2d}X.
    \end{align*}
    Since $I_{2d'}+iJ_{2d'}\geq 0$, using the above identity, we get
    \begin{align*}
         0\leq L_{21}^T(I_{2d'}+iJ_{2d'})L_{21} 
          %=L_{21}^TL_{21}+iL_{21}^TJ_{2d'}L_{21}
          =L_{21}^TL_{21}+i(J_{2d}-X^TJ_{2d}X)
          =Y+i(J_{2d}-X^TJ_{2d}X).
    \end{align*}
    $(ii)\Rightarrow (i)$  Assume that $Y+i(J_{2d}-X^TJ_{2d}X)\geq 0$. 
    %Then, by Remark \ref{rmk-XY-char-Y-positive}, $Y\geq 0$. 
    Then for any $\z\in\mbb{R}^{2d}$ we have 
     \begin{align*}
         0 \leq \mel{\z}{\big(Y+i(J-X^TJX\big)}{\z} =\mel{\z}{Y}{\z}+i\mel{\z}{(J-X^TJX)}{\z}. 
     \end{align*}
     This implies that $\mel{\z}{(J-X^TJX)}{\z}=0$ and $\mel{\z}{Y}{\z}\geq 0$ for all $\z\in\mbb{R}^{2d}$. Therefore $Y\geq 0.$
    Now we will construct $L\in M_{6d}(\BR)$, i.e., $d' = 2d$. 
    To do so, we proceed by splitting the remainder of the proof into two steps:\\
    
    \noindent\underline{\textsc{Step 1:}} To show that there exists $L_{21}\in M_{4d\times 2d}(\BR)$ such that $L_{21}^TJ_{4d}L_{21}=J_{2d}-X^TJ_{2d}X$ and $L_{21}^TL_{21}=Y$. We do this in two cases: \\    
    
    \ul{\textsc{Case 1:}} Assume that $Y$ is invertible. Then the operator $Y^{-\frac{1}{2}}(J_{2d}-X^TJ_{2d}X)Y^{-\frac{1}{2}}$ is skew-symmetric, and hence by \cite[Corollary 2.5.11]{Horn2012-xx}, %Proposition \ref{prop-skew symmertric-diag}, 
    there exists a  positive diagonal matrix $D\in M_{d}(\mbb{R})$ and an orthogonal matrix $R\in M_{2d}(\mbb{R})$ such that
    \begin{align*}
     R^TY^{-\frac{1}{2}}(J_{2d}-X^TJ_{2d}X)Y^{-\frac{1}{2}}R=\bmqty{0&D\\-D&0}.
    \end{align*}
    Observe that $Y+i(J_{2d}-X^TJ_{2d}X)\geq 0 $ implies that $I+iR^TY^{-\frac{1}{2}}(J_{2d}-X^TJ_{2d}X)Y^{-\frac{1}{2}}R\geq 0$, which further implies $I+i\bmqty{0&D\\ -D&0}\geq 0$.
    % \begin{align*}
    %   Y+i(J_{2d}-X^TJ_{2d}X)\geq 0 
    %      &\implies I+iR^TY^{-\frac{1}{2}}(J_{2d}-X^TJ_{2d}X)Y^{-\frac{1}{2}}R\geq 0 \\
    %      &\implies I+i\bmqty{0&D\\ -D&0}\geq 0\\
    %    %  &\implies \norm{iD}\leq I.  % See Paulsen book Lemma 3.1
    % \end{align*}
    Hence $D$ is a positive contraction. Therefore, by Corollary \ref{cor-intertwing-J and D}, there exists $Q\in M_{2d\times 4d}(\mbb{R})$ such that $QQ^T=I_{2d}$ and $QJ_{4d}Q^T=\bmqty{0&D\\-D&0}$. Let $L_{21}:=Q^TR^TY^{\frac{1}{2}}$. Then $L_{21}^TL_{21}=Y$ and
    \begin{align*}
        L_{21}^TJ_{4d}L_{21}
            %&=Y^{\frac{1}{2}}RQJ_{4d}Q^TR^TY^{\frac{1}{2}}\\
            &=Y^{\frac{1}{2}}R\bmqty{0&D\\-D&0}R^TY^{\frac{1}{2}}\\
            &=Y^{\frac{1}{2}}RR^TY^{-\frac{1}{2}}(J_{2d}-X^TJ_{2d}X)Y^{-\frac{1}{2}}RR^TY^{\frac{1}{2}}\\
            &=J_{2d}-X^TJ_{2d}X.
    \end{align*}
    
    \ul{\textsc{Case 2:}} Suppose $Y$ is not invertible. Note that $Y_n:=Y+\frac{1}{n}I$ is positive, invertible and satisfies $Y_n+i(J_{2d}-X^TJ_{2d}X)\geq 0$ for all $n\geq 1$. Therefore, by \textsc{Case 1},  there exist orthogonal matrices $R_n\in M_{2d}(\BR)$ and $Q_n\in M_{4d\times 2d}(\BR)$ such that $Q_nQ_n^{T}=I_{2d}$ and $L_{21,n}:=Q_n^TR_n^TY_n^{\frac{1}{2}}$ satisfies 
    \begin{align*}
        (L_{21,n})^TL_{21,n}=Y_n
        \qquad\mbox{and}\qquad
        L_{21,n}^TJ_{4d}L_{21,n}=J_{2d}-X^TJ_{2d}X,\qquad\forall~n\geq 1.
    \end{align*}
    Since both the sets  $\{Q\in M_{4d\times 2d}(\BR):QQ^{T}=I\}$  and $\{R\in M_{2d}(\BR): R \mbox{ is orthogonal}\}$ are compact, we get a subsequence $\{Q_{n_k}\}\subseteq\{Q_n\}$ and $\{R_{n_k}\}\subseteq\{R_n\}$ such that $Q_{n_k}$ converges to $Q$ and $R_{n_k}$ converges to $R$. Note that $QQ^T=I_{2d}$ and $R$ is orthogonal. Further, $L_{21,n_k}$ converges to $L_{21}:=Q^TR^TY^{\frac{1}{2}}$. From the construction $L_{21}^TL_{21}=Y$ and $L_{21}^TJ_{4d}L_{21}=J_{2d}-X^TJ_{2d}X$.\\
    
    \noindent\ul{\textsc{Step 2:}} To construct the required matrix $L$. 
    For all $1\leq i\leq d$, let $\xi_i:=(e_i,0)$ and $\eta_i=(0,e_i)\in\mbb{R}^{2d}$, where $0\in\mbb{R}^d$ and $\{e_i\}_{i=1}^d\subseteq\mbb{R}^d$ is the standard orthonormal basis. In $\mbb{R}^{2d}\oplus\mbb{R}^{4d}$, consider the sets 
    \begin{align*}
        E_1=\big\{\bu_i:=X\xi_i\oplus L_{21}\xi_i\big\}_{i=1}^d
        \quad\mbox{and}\qquad
        E_2=\big\{\bv_i:=X\eta_i\oplus L_{21}\eta_i\big\}_{i=1}^d.
    \end{align*}
    Note that 
    \begin{align*}
        \bmqty{X\\ L_{21}}=\bmqty{\bu_1&\bu_2&\cdots&\bu_d&\bv_1&\bv_2&\cdots&\bv_d}.
    \end{align*}
    Further, for all $1\leq i,k\leq d$ we have
    \begin{align*}
        \bu_i^T(J_{2d}\oplus J_{4d})\bu_k
        &=(X\xi_i\oplus L_{21}\xi_i)^T(J_{2d}\oplus J_{4d})(X\xi_k\oplus L_{21}\xi_k)\\
        %&=(X\xi_i)^TJ_{2d}X\xi_k+(L_{21}\xi_i)^TJ_{4d}L_{21}\xi_k\\
        &=\xi_i^T(X^TJ_{2d}X+L_{21}^TJ_{4d}L_{21})\xi_k\\
        &=\xi_i^TJ_{2d}\xi_k\\
        &=0.       
    \end{align*}
    Similarly, for all $1\leq i,k\leq d$, we see that
    \begin{align*}
       \bv_i^T(J_{2d}\oplus J_{4d})\bv_k=\eta_i^TJ_{2d}\eta_k=0
       \qquad\mbox{and}\qquad
       \bu_i^T(J_{2d}\oplus J_{4d})\bv_k=\xi_i^TJ_{2d}\eta_k=\delta_{i,k}.
    \end{align*}
    Observe that $J_{2d}\oplus J_{4d}=P^TJ_{6d}P$, where $P\in M_{6d}(\BR)$ is the permutation matrix given by
    \begin{align*}
        P=\bmqty{I_d&0&0&0\\0&0&I_d&0\\0&I_{2d}&0&0\\0&0&0&I_{2d}}.
    \end{align*}
    Then, for all $1\leq i,k\leq d$ we have
    \begin{align*}
        (P\bu_i)^TJ_{6d}(P\bu_k)&=\bu_i^T(J_{2d}\oplus J_{4d})\bu_k=0\\
        (P\bv_i)^TJ_{6d}(P\bv_k)&=\bv_i^T(J_{2d}\oplus J_{4d})\bv_k=0\\
        (P\bu_i)^TJ_{6d}(P\bv_k)&=\bu_i^T(J_{2d}\oplus J_{4d})\bv_k=\delta_{ik}.
    \end{align*}
    Hence, by \cite[Theorem 1.15]{Gosson2006-lr} and 
    \eqref{eq-sym-matrix-basis}, there exist vectors $\{\bu_j'\}_{j=d+1}^{3d},\{\bv_j'\}_{j=d+1}^{3d}\subseteq \BR^{6d}$ such that 
    \begin{align*}
        %M:=\bmqty{P\bu_1&\cdots&P\bu_d&\bu_{d+1}'&\cdots&\bu_{2d}'&P\bv_1&\cdots&p\bv_d&\bv_{d+1}'&\cdots&\bv_{2d}'}\in M_{6d}(\BR)
        M:=\Big[P\bu_1~~\cdots~~P\bu_d~~\bu_{d+1}'~~\cdots~~\bu_{3d}'~~P\bv_1~~\cdots~~P\bv_d~~\bv_{d+1}'~~\cdots~~\bv_{3d}'\Big]\in M_{6d}(\BR)
    \end{align*}
    is a symplectic matrix. Consequently, $L:=P^TMP$ is a $(J_{2d}\oplus J_{4d})$-symplectic matrix. But 
    \begin{align*}
        P^TMP
        &=\Big[\bu_1~~\cdots~~\bu_d~~P^T\bu_{d+1}'~~\cdots~~P^T\bu_{3d}'~~\bv_1~~\cdots~~\bv_d~~P^T\bv_{d+1}'~~\cdots~~P^T\bv_{3d}'\Big]P\\
        &=\bmqty{\bu_1&\cdots&\bu_d&\bv_1&\cdots&\bv_d&\ast&\ast&\cdots&\ast}\\
        &=\bmqty{X&\ast\\ L_{21}&\ast},
    \end{align*}
    where '$\ast$' denotes unspecified block matrices.\\
    $(ii)\iff (iii)$ Assume that  $Z:=Y+i(J_{2d}-X^TJ_{2d}X)\geq 0$. Then we have that $Y\geq 0$ as noticed above. Since $J^T=-J$, the transpose matrix $Z^T=Y-i(J_{2d}-X^TJ_{2d}X)\geq 0$. Conversely, the reverse direction follows in a similar manner. This completes the proof. 
\end{proof}  

  Now we are ready to prove our main theorem.

\begin{theorem}\label{thm-Gaussian map-char}
    Let $\Psi:\TC{\Gamma(\mbb{C}^d)}\to\TC{\Gamma(\mbb{C}^d)}$ be a quantum channel. Then the following are equivalent: 
    \begin{enumerate}[label=(\roman*)]
        \item\label{item-Gaussian} The channel $\Psi$ maps Gaussian states to Gaussian states; % There exist continuous maps $\alpha_1:\BR^{2d}\times \msc{CM}(d)\to \BR^{2d}$ and $\alpha_2:\BR^{2d}\times \msc{CM}(d)\to \msc{CM}(d)$ such that for every Gaussian state $\rho_{(\m,S)}$, the transformed state $\Psi(\rho_{(\m,S)})$ is also a Gaussian state with mean $\alpha_1(\m,S)$ and covariance $\alpha_2(\m,S)$; 
        \item\label{item-Psi} There exist $X,Y\in M_{2d}(\BR)$ and $\w\in\BR^{2d}$ such that for every Gaussian state $\rho_{(\m,S)}$, the transformed state $\Psi(\rho_{(\m,S)})$ is also a Gaussian state with mean $X^T\m+\w$ and covariance $X^TSX+Y$, i.e., 
        \begin{align}\label{eq-QGC-state-image}
            \Psi(\rho_{(\m,S)})=\rho_{(X^T\m+\w,X^TSX+Y)};
        \end{align}
        \item\label{item-Psi-dual}  There exist $X,Y\in M_{2d}(\BR)$ and $\w\in\BR^{2d}$ such that  the dual map $\Psi^*: \B{\GCd}\to\B{\GCd}$ satisfies
        \begin{align}\label{eq-GQC-dualmap}
            \Psi^*(W(\z))=\exp{-i{\w}^T{\z}-\frac{1}{2}{\z}^T{Y\z}} W(X\z),\quad \forall~ \z\in\mbb{\BR}^{2d}; 
        \end{align}
        \item\label{item-Stine} There exists a Gaussian unitary operator ${U}$ on $\Gamma(\BC^{d+d'})= \Gamma(\BC^{d})\otimes \Gamma(\BC^{d'})$, for some $d'\geq 1$,  such that
        \begin{align}\label{eq-GQC-stinespring-1}
             \Psi(\varrho)=\tr_{2}\big({U}(\varrho\otimes \rho_{(0,I_{2d'})})U^{\dagger}\big), \qquad \forall~ \varrho\in\TC{\Gamma(\mbb{C}^d)};
        \end{align} 
        \item \label{item-ampli} The channel $id_{\TC{\Gamma(\mbb{C}^k)}}\otimes\Psi$ maps Gaussian states on $\Gamma(\mbb{C}^{k}\oplus \mbb{C}^{d})$ to Gaussian states for all $k\in\mbb{N}\cup\{0\}$. (Here $\Gamma(\BC^{0}):=\mbb{C}$.)
    \end{enumerate}
    In such a case, in both \ref{item-Psi} and \ref{item-Psi-dual}, the matrix $Y+i(J_{2d}-X^TJ_{2d}X)\geq 0$ and hence $Y\geq0$.
    %(A quantum channel $\Psi$ that satisfies any one of the above equivalence conditions is called a \textit{$d$-mode quantum Gaussian channel}.)
\end{theorem}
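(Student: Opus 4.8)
The plan is to prove $\ref{item-Gaussian}\Rightarrow\ref{item-Psi}\Leftrightarrow\ref{item-Psi-dual}$ and $\ref{item-Psi}\Rightarrow\ref{item-Stine}\Rightarrow\ref{item-ampli}\Rightarrow\ref{item-Gaussian}$, deriving the positivity $Y+i(J_{2d}-X^TJ_{2d}X)\ge 0$ as part of the step $\ref{item-Psi}\Rightarrow\ref{item-Stine}$. Several links are routine. The equivalence $\ref{item-Psi}\Leftrightarrow\ref{item-Psi-dual}$ is a characteristic-function identity: from $\tr(\Psi(\rho)W(\z))=\tr(\rho\,\Psi^*(W(\z)))$ and the real-coordinate form $\widehat{\rho_{(\m,S)}}(\z)=\exp(-i\m^T\z-\tfrac12\z^TS\z)$, the right-hand side of \eqref{eq-GQC-dualmap} multiplies $\widehat{\rho_{(\m,S)}}(X\z)$ by $\exp(-i\w^T\z-\tfrac12\z^TY\z)$, which is precisely $\widehat{\rho_{(X^T\m+\w,\,X^TSX+Y)}}(\z)$; conversely, given \eqref{eq-QGC-state-image}, the two operators in \eqref{eq-GQC-dualmap} have equal trace against every Gaussian state, and since Gaussian (indeed coherent) states are total in $\TC{\GCd}$ they coincide. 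The implication $\ref{item-Psi}\Rightarrow\ref{item-Gaussian}$ is immediate and $\ref{item-ampli}\Rightarrow\ref{item-Gaussian}$ is the case $k=0$. For $\ref{item-Stine}\Rightarrow\ref{item-ampli}$ I would write the Gaussian unitary as $U=\lambda W(\bu)\Gamma(M)$ via Proposition \ref{prop:gaussian-unitary}; then $id\otimes U=\lambda\,W(\bm{0}\oplus\bu)\Gamma(I_{2k}\oplus M)$ is again Gaussian, so $id\otimes\Psi$ has exactly the dilation shape of Proposition \ref{prop-mean-covar-gaussian symmetry} and hence preserves Gaussianity.

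The first substantial step is $\ref{item-Gaussian}\Rightarrow\ref{item-Psi}$, where the rigid affine-symplectic form must be squeezed out of the bare hypothesis. Writing $\Psi(\rho_{(\m,S)})=\rho_{(\mu(\m,S),\,\Sigma(\m,S))}$, trace-norm continuity of $\Psi$ and of $\m\mapsto\rho_{(\m,S)}$ makes $\mu$ and $\Sigma$ continuous. The key device is Gaussian averaging of the mean: for $C>0$, $\int\rho_{(\m+\bt,S)}\,d\nu_C(\bt)=\rho_{(\m,S+C)}$, where $\nu_C$ is the centered Gaussian measure with covariance $C$. Applying the bounded linear map $\Psi$ and comparing characteristic functions, $\rho_{(\mu(\m,S+C),\Sigma(\m,S+C))}$ is realized as a continuous Gaussian mixture of the states $\rho_{(\mu(\m+\bt,S),\Sigma(\m+\bt,S))}$; since such a mixture is Gaussian only when all mixed covariances coincide, $\Sigma(\m,S)$ must be independent of $\m$, and the residual mean identity forces $\bt\mapsto\mu(\m+\bt,S)$ to be affine, hence $\mu(\m,S)=A(S)\m+\mathbf b(S)$. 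Propagating the covariance-addition relation $\Sigma(S+C)-\Sigma(S)=A(S)CA(S)^T$ and the mean relation across two successive averagings shows $A(S)$ and $\mathbf b(S)$ are independent of $S$; setting $X:=A^T$ gives $\mu(\m,S)=X^T\m+\w$ and $\Sigma(\m,S)=X^TSX+Y$ with $Y=Y^T$, which is \ref{item-Psi}.

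It remains to prove $\ref{item-Psi}\Rightarrow\ref{item-Stine}$, which is where Proposition \ref{prop-symplectic-extension} is used and where the positivity is needed. I would obtain the positivity from complete positivity of $\Psi$ by a Choi-type test. Applying $id\otimes\Psi$ on $\GCd\otimes\GCd$ to the two-mode squeezed Gaussian state with mean $\bm{0}$ and covariance $\bmqty{\cosh(2r)I_{2d}&\sinh(2r)\Lambda\\\sinh(2r)\Lambda&\cosh(2r)I_{2d}}$, where $\Lambda=\bmqty{I_d&0\\0&-I_d}$ satisfies $\Lambda^T\Lambda=I_{2d}$ and $\Lambda^TJ_{2d}\Lambda=-J_{2d}$, produces by \eqref{eq-GQC-dualmap} a state whose covariance has lower-right block $\cosh(2r)X^TX+Y$, off-diagonal block $\sinh(2r)\Lambda X$, and upper-left block $\cosh(2r)I_{2d}$. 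Because the output is a state, this covariance plus $i(J_{2d}\oplus J_{2d})$ is positive; taking the Schur complement with respect to the invertible block $\cosh(2r)I_{2d}+iJ_{2d}$ and using $\cosh^2-\sinh^2=1$ together with the two identities for $\Lambda$, the whole expression collapses (independently of $r$) to exactly $Y+i(J_{2d}-X^TJ_{2d}X)\ge 0$; and $Y\ge 0$ follows since $J_{2d}-X^TJ_{2d}X$ is antisymmetric. With the positivity in hand, Proposition \ref{prop-symplectic-extension} yields a $(J_{2d}\oplus J_{2d'})$-symplectic $L$ whose top-left block is $X$ and whose lower-left block $L_{21}$ satisfies $L_{21}^TL_{21}=Y$; taking $M=L^{-1}$, choosing $\bu_1$ with $2J_{2d}\bu_1=\w$, and setting $U=W(\bu_1\oplus\bm{0})\Gamma(M)$, Proposition \ref{prop-mean-covar-gaussian symmetry} shows the associated dilation channel agrees with $\Psi$ on every Gaussian state, hence equals $\Psi$ by totality, giving \ref{item-Stine}.

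I expect the \textbf{main obstacle} to be the structural step $\ref{item-Gaussian}\Rightarrow\ref{item-Psi}$—specifically the rigorous justification that a continuous Gaussian mixture of Gaussian states is Gaussian only when the mixed covariances agree and the means depend affinely on the averaging parameter, and the bookkeeping that removes the $S$-dependence of $A(S)$ and $\mathbf b(S)$. A secondary pitfall is getting every sign and each $\Lambda$-identity exactly right in the Choi computation, since these are what convert mere positivity of the output into the sharp complete-positivity inequality $Y+i(J_{2d}-X^TJ_{2d}X)\ge 0$.
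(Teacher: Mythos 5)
Your overall architecture coincides with the paper's: the same cycle \ref{item-Gaussian}$\Rightarrow$\ref{item-Psi}$\Rightarrow$\ref{item-Stine}$\Rightarrow$\ref{item-ampli}$\Rightarrow$\ref{item-Gaussian} together with the characteristic-function equivalence \ref{item-Psi}$\Leftrightarrow$\ref{item-Psi-dual} via totality of Gaussian states, the construction of the dilation from Proposition \ref{prop-symplectic-extension} with $U=W(\tfrac12 J_{2d}^T\w\oplus 0)\Gamma(L^{-1})$ and an ``agree on Gaussian states, hence everywhere'' conclusion, and the tensoring $\widetilde U=I\otimes U$ for \ref{item-Stine}$\Rightarrow$\ref{item-ampli}. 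You genuinely deviate in two places. First, for the inequality $Y+i(J_{2d}-X^TJ_{2d}X)\ge 0$ the paper simply invokes Proposition \ref{prop-pos-CP-condn}, deferring the proof to Parthasarathy; you instead derive it by applying $id\otimes\Psi$ to the two-mode squeezed state and taking a Schur complement with respect to the block $\cosh(2r)I_{2d}+iJ_{2d}$. This computation is correct --- using $(\cosh(2r)I+iJ)^{-1}=\sinh^{-2}(2r)(\cosh(2r)I-iJ)$ and $\Lambda J\Lambda=-J$, the $r$-dependence cancels and the Schur complement is exactly $Y+i(J_{2d}-X^TJ_{2d}X)$ --- and it makes self-contained the one analytic ingredient the paper imports from outside. (Note it needs $id\otimes\Psi$ to send the squeezed state to a Gaussian state with the asserted covariance, which you legitimately get from \eqref{eq-GQC-dualmap} once \ref{item-Psi}$\Leftrightarrow$\ref{item-Psi-dual} is in hand.) Second, for \ref{item-Gaussian}$\Rightarrow$\ref{item-Psi} the paper outsources the entire structural step to Poletti's theorem (Theorem \ref{thm-Polletti} in the appendix), whereas you sketch a Gaussian-averaging argument from scratch.

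That second step is where your proposal is not yet a proof. The identity $\int\rho_{(\m+\bt,S)}\,d\nu_C(\bt)=\rho_{(\m,S+C)}$ is fine as a trace-norm Bochner integral, but the two claims carrying all the weight are asserted rather than proved: (a) that a continuous Gaussian mixture $\int\rho_{(\mu(\bt),\Sigma(\bt))}\,d\nu_C(\bt)$ can be a Gaussian state only if $\Sigma(\bt)$ is constant $\nu_C$-a.e., and (b) that the resulting requirement --- the pushforward of every centred Gaussian measure under $\bt\mapsto\mu(\m+\bt,S)$ is again Gaussian --- forces this map to be affine. Both are true but are Cram\'er/Darmois-type statements needing real arguments; they are essentially the content of Poletti's proof. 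In addition, the bookkeeping that removes the $S$-dependence of $A(S)$ and $\mathbf b(S)$ is thinner than you suggest: the relation $A(S+C_1)CA(S+C_1)^T=A(S)CA(S)^T$ for all $C>0$ pins down $A(S+C_1)\x$ only up to sign for each $\x$, and one must also connect arbitrary pairs $S,S'\in\msc{CM}(d)$ (e.g.\ through a common majorant $S+C=S'+C'$) before concluding $A$ is a single matrix. None of this is unfixable, but as written the hardest implication of the theorem rests on unproven claims, which is precisely why the paper isolates it as Theorem \ref{thm-Polletti}.
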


\begin{proof}
    %%%%%%%%%%%%%%%%$(i)\imples (ii)$%%%%%%%%%%%%%%%%%%%%%
    $\ref{item-Gaussian}\implies\ref{item-Psi}$ Note that a Gaussian state is completely characterized by its mean and covariance matrix. Hence, by assumption, there exist  functions $\alpha_1:\BR^{2d}\times \msc{CM}(d)\to \BR^{2d}$ and $\alpha_2:\BR^{2d}\times \msc{CM}(d)\to \msc{CM}(d)$ such that for every Gaussian state $\rho_{(\m,S)}$, the transformed state $\Psi(\rho_{(\m,S)})$ is also a Gaussian state with mean $\alpha_1(\m,S)$ and covariance $\alpha_2(\m,S)$. The rest of the proof follows from Theorem \ref{thm-Polletti}. \\%Theorem \ref{thm-Gaussian map-char}. \\
    %%%%%%%%%%$(ii)\Rightarrow (iv)$%%%%%%%%%%%%%%%%%%%%%%%%%%%%%%%%%%%%%%%%
    $\ref{item-Psi}\implies\ref{item-Stine}$ Assume that the channel $\Psi$ satisfies \eqref{eq-QGC-state-image}. Then, by Proposition \ref{prop-pos-CP-condn} in the Appendix, we have $Y+i(J_{2d}-X^TJ_{2d}X)\geq 0$. Hence, by Proposition \eqref{prop-symplectic-extension}, we get $L=\bmqty{X&L_{12}\\L_{21}&L_{22}}\in Sp(6d, J_{2d}\oplus J_{4d})$ such that $L_{21}^{T}L_{21}=Y$. Then $U:=W(\frac{1}{2}J_{2d}^T\w\oplus 0)\Gamma(L^{-1})\in\Gamma(\mathbb{C}^{3d})$ is a Gaussian unitary, where $0\in\BR^{4d}$. Define  $\Psi':\TC{\Gamma(\mbb{C}^d)}\to\TC{\Gamma(\mbb{C}^d)}$ by 
    \begin{align*}\Psi'(\varrho):=\tr_{2}\big(U(\varrho\otimes \rho_{(0,I_{4d}})U^\dagger\big), \qquad \forall~ \varrho\in\TC{\Gamma(\mbb{C}^d)}.
    \end{align*}
    Then $\Psi'$ is a quantum channel. It can be easily verified that
    \begin{align*}
        \tr(\Psi'(\rho_{(\m,S)})W(\z))=\tr(\Psi(\rho_{(\m,S)})W(\z))
    \end{align*}
    for all Gaussian states $\rho_{(\m,S)}$ and $\z\in\BR^{2d}$. Hence, we conclude that $\Psi'=\Psi$.\\
    %%%%%%%%%(iv)\Rightarrow (v)%%%%%%%%%%%%%%%%%%%%%%%%%%%%%%%%%%%%%%%%%%%%%%%%
    $\ref{item-Stine}\implies\ref{item-ampli}$. By the assumption, there exists a Gaussian unitary operator ${U}$ on $\Gamma(\BC^{d+d'})= \Gamma(\BC^{d})\otimes \Gamma(\BC^{d'})$, for some $d'\geq 1$,  such that
    \begin{align}
             \Psi(\varrho)=\tr_{2}\big({U}(\varrho\otimes \rho_{(0,I_{2d'})})U^{\dagger}\big), \qquad \forall~ \varrho\in\TC{\Gamma(\mbb{C}^d)};
    \end{align}
    Fix $k\geq 1$. We show that there exists Gaussian unitary $\widetilde{U}$ on $\Gamma(\BC^{k+d+d'})=\Gamma(\BC^{k+d})\otimes\Gamma(\BC^{d'})$ such that
    \begin{align*}
            (id_{\TC{\GCk}}\otimes\Psi)(\widetilde{\rho})=\tr_{2}\Big(\widetilde{U}(\widetilde{\rho}\otimes\rho_{(0,I_{2d'})})\widetilde{U}^\dagger\Big),\qquad\forall~\widetilde{\rho}\in\TC{\BC^{k+d}},
    \end{align*}
    where $\tr_2$ is the partial trace over the second factor $\Gamma(\BC^{d'})$. Then, from Proposition \ref{prop-mean-covar-gaussian symmetry}, we conclude that $id_{\TC{\GCk}}\otimes\Psi$ maps Gaussian states to Gaussian states. To this end we consider the Gaussian unitary $\widetilde{U}:=I_{\Gamma(\BC^{k})}\otimes U$ defined on $\Gamma(\BC^{k})\otimes \Gamma(\BC^{d+d'})=\Gamma(\BC^{k+d+d'})$. Then for any $T_1\in\TC{\Gamma(\BC^{k})}$ and $T_2\in\TC{\Gamma(\BC^{d})}$,
    \begin{align*}
      \tr_{2}\big(\widetilde{U}(T_1\otimes T_2\otimes \rho_{(0,I_{2d'})})\widetilde{U}^{\dagger}\big)
      &=\tr_{2}\big((I_{\Gamma(\BC^{k})}\otimes U)(T_1\otimes T_2\otimes \rho_{(0,I_{2d'})})(I_{\Gamma(\BC^{k})}\otimes U)^{\dagger}\big)\\
      &=\tr_{2}\big(T_1\otimes U(T_2\otimes \rho_{(0,I_{2d'})})U^{\dagger} \big)\\
      &=T_1\otimes \tr_{2}\big(U(T_2\otimes \rho_{(0,I_{2d'})})U^{\dagger}\big)\\
      &=(id_{\TC{\Gamma(\mbb{C}^k)}}\otimes\Psi)(T_1\otimes T_2).
    \end{align*}
    Since the set $\{T_1\otimes T_2: T_1\in\TC{\Gamma(\BC^{k})},T_1\in\TC{\Gamma(\BC^{d})}\}$ is total, we conclude that 
    \begin{align*}
      (id_{\TC{\Gamma(\mbb{C}^k)}}\otimes\Psi)(\widetilde{\rho})=
      \tr_{2}\big(\widetilde{U}(\widetilde{\rho}\otimes \rho_{(0,I_{2d'})})\widetilde{U}^{\dagger}\big), \qquad\forall~ \widetilde{\rho}\in\TC{\Gamma(\BC^{k+d})}.
    \end{align*}
    %%%%%%%%$(v)\Rightarrow (i)$%%%%%%%%%%%%%%%%%%%%%%%%%%%%%%%%%
    $\ref{item-ampli}\Rightarrow\ref{item-Gaussian}$. Follows from the case $k=0$.\\
    %%%%%%%%%%%%%%$(ii)\iff (iii)$%%%%%%%%%%%%%%%%%%%%%%%%%%%%%%%%%%%%%%%%%%%
    $\ref{item-Psi}\iff\ref{item-Psi-dual}$ Assume that there exist $X,Y\in M_{2d}(\BR)$ such that $\Psi$ satisfies \eqref{eq-QGC-state-image}. Then, for any $\z\in \BR^{2d}$ and Gaussian states $\rho_{(\m,S)}$ consider
    \begin{align*}
     \tr\left(\rho_{(\m,S)}\Psi^*(W(\z))\right)&=\tr\left(\Psi(\rho_{(\m,S)})W(\z)\right)\\&= \widehat{\Psi(\rho_{(\m,S)})}(\z)\\
     &=\exp{-i\left({X^{T}\m+\w}\right)^T{\z}-\frac{1}{2}{\z}^T{(X^{T}SX+Y)\z}}\\
     &=\exp{-i\w^T{\z}-\frac{1}{2}\z^TY\z}\exp{-i\m^TX\z-\frac{1}{2}\z^TX^TSX\z}\\
     &=\exp{-i{\w}^T{\z}-\frac{1}{2}{\z}^T{Y\z}}\tr\big(\rho_{(\m,S)}W(X\z)\big).
    \end{align*}
    Since the span of Gaussian states is dense in $\TC{\Gamma(\mbb{C}^d)}$ by \cite[Theorem II.4]{Fannes1975-dh}, (cf. \cite{Fannes1976-yg}), we get 
    \begin{align*}
     \tr\big(\varrho\Psi^*(W(\z))\big)=\exp{-i{\w}^T{\z}-\frac{1}{2}{\z}^T{Y\z}}\tr\big(\varrho W(X\z)\big),\qquad\forall~\varrho\in\TC{\Gamma(\mbb{C}^d)}.
    \end{align*}
    Hence
    \begin{align*}
     \Psi^*(W(\z))=\exp{-i{\w}^T{\z}-\frac{1}{2}{\z}^T{Y\z}} W(X\z), \qquad \forall~ \z\in\BR^{2d}.
    \end{align*}
    Conversely, if $\Psi^*$ satisfies \eqref{eq-GQC-dualmap}, then for all Gaussian states $\rho_{(\m,S)}$ and $\z\in\mbb{R}^{2d}$ we have
    \begin{align*}
        \tr(\Psi(\rho_{(\m,S)})W(z))
           &=\tr(\rho_{(\m,S)}\Psi^*(W(z)))\\
           &=\exp\{-i(X^T\m+\w)^Tz-\frac{1}{2}\z^T(X^TSX+Y)\z\}\\
           &=\tr(\rho_{(X^T+\m,X^TSX+Y)}W(z)),
    \end{align*}
    so that $\Psi$ satisfies \eqref{eq-QGC-state-image}.%\\
    %%%%%%%%%%%%%%%%%%$(iv)\implies (ii)$%%%%%%%%%%%%%%%%%%%%%%%%%%%%%%%%%%%%%%%%%%%%
    %$\ref{item-Stine}\Rightarrow \ref{item-Psi}$. Assume that $\Psi$ is given by \eqref{eq-GQC-stinespring-1}. By Proposition \ref{prop:gaussian-unitary}, there exist $\lambda\in\BC$ with $\abs{\lambda}=1$,  $\bu=\bu_1\oplus\bu_2\in\mbb{R}^{2d}\oplus\mbb{\BR}^{2d'}$ and $L\in Sp(2(d+d'), J_{2d}\oplus J_{2d'})$ such that  $U= \lambda W(\bu)\Gamma(L)$. Without loss of generality, assume that $\lambda=1$. Write $L^{-1}=\bmqty{L_{11}&L_{12}\\L_{21}&L_{22}}$ with $L_{11}\in M_{2d}(\BR)$ and $L_{22}\in M_{2d'}(\BR)$. Then, by Proposition \ref{prop-mean-covar-gaussian symmetry},  $\Psi$ is a quantum channel such that for every Gaussian state $\rho_{(\m,S)}$, the transformed Gaussian state is given by
    %\begin{align*}
    %   \Psi(\rho_{(\m,S)})=\rho_{L_{11}^{T}\m+\w, L_{11}^{T}SL_{11}+L_{21}^{T}L_{21}},
    %\end{align*}
    %where $\w ={2}J_{2d}\bu_1$. Take $X=L_{11}$ and $Y=L_{21}^{T}L_{21}$. \\ %Furthermore, since $L^{-1}$ is $(J_{2d}\oplus J_{2d'})$-symplectic, by Proposition \ref{prop-symplectic-extension} we have $Y+i(J_{2d}-X^TJ_{2d}X)\geq 0$.\\
    %%%%%%%%%%%%%%%%%%%%%%%%%%%%%%%%%%%%%%%%%%%%%%%%%%%%%%%%%%%%
    This completes the proof.
\end{proof}

    A quantum channel $\Psi:\TC{\GCd}\to\TC{\GCd}$ that satisfies any of the equivalence conditions in the above theorem is called a \textit{($d$-mode) quantum Gaussian channel}. Quantum channels $\Psi$ of the form \eqref{eq-GQC-stinespring-1} or equivalently \eqref{eq-GQC-stinespring} are commonly referred to as Gaussian channels among physicists. Additionally, quantum channels that satisfy the condition $\ref{item-Psi-dual}$ of the above theorem are also referred to as  Gaussian channels in the literature. 

\begin{rmk}
    Let $\Psi:\TC{\GCd}\to\TC{\GCd}$ be a quantum channel that satisfies any of the statements \ref{item-Psi}-\ref{item-Psi-dual} of the above theorem. Then the above proof indicates that we can choose the Gaussian unitary operator in \ref{item-Stine} to be defined on $\Gamma(\BC^{3d})=\Gamma(\BC^{d})\otimes\Gamma(\BC^{2d})$. However, we have not discussed the minimality of $d'$, i.e., the minimal number of environment modes required for  the dilation, in Theorem \ref{thm-Gaussian map-char} \ref{item-Stine}. Although a complete characterization of the minimal number of environment modes required for a dilation is not available in the literature, an upper bound for $d'$ is given in \cite[Theorem 2]{Caruso2008-ac} as
    \begin{align*}
        d'=\text{rank}[Y]-\text{rank}[Y-\Sigma Y^{\ominus}\Sigma^T],
    \end{align*}
    where $Y^{\ominus}$ is the Moore-Penrose inverse of $Y$, $\Sigma = J_{2d}-X^TJ_{2d}X$ and $X, Y$ are as in \eqref{eq-QGC-state-image}. 
\end{rmk}

\begin{rmk} We give a proof of  $\ref{item-Stine}\Rightarrow \ref{item-Psi}$ here because it shows how to get the channel parameters $X$ and $Y$ from the Stinespring representation of a Gaussian channel. 
    Suppose $\Psi$ is a quantum Gaussian channel given by \eqref{eq-GQC-stinespring-1}. Then, by Proposition \ref{prop:gaussian-unitary}, there exist $\lambda\in\BC$ with $\abs{\lambda}=1$,  $\bu=\bu_1\oplus\bu_2\in\mbb{R}^{2d}\oplus\mbb{\BR}^{2d'}$ and $L\in Sp(2(d+d'), J_{2d}\oplus J_{2d'})$ such that  $U= \lambda W(\bu)\Gamma(L)$. Without loss of generality, assume that $\lambda=1$. Write $L^{-1}=\bmqty{L_{11}&L_{12}\\L_{21}&L_{22}}$ with $L_{11}\in M_{2d}(\BR)$ and $L_{22}\in M_{2d'}(\BR)$. Then, by Proposition \ref{prop-mean-covar-gaussian symmetry}, we have 
    \begin{align*}
       \Psi(\rho_{(\m,S)})=\rho_{(X^{T}\m+\w, X^{T}SX+Y)},
    \end{align*}
    is a Gaussian state for any Gaussian state $\rho_{(\m,S)}$, where $\w ={2}J_{2d}\bu_1, X=L_{11}$ and $Y=L_{21}^{T}L_{21}$. 
\end{rmk}
    
\begin{cor}\label{cor-KRP-que}
    Let $X,Y\in M_{2d}(\mbb{R})$. Then the following are equivalent:
    \begin{enumerate}[label=(\roman*)]
        \item There exists a quantum Gaussian channel $\Psi:\TC{\GCd}\to\TC{\GCd}$ that transforms a Gaussian state $\rho_{(\m,S)}$ to a Gaussian state with covariance $X^TSX+Y$; 
        \item $Y+i(J_{2d}-X^TJ_{2d}X)\geq 0$.
    \end{enumerate}
\end{cor}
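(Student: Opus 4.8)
The plan is to read this corollary as the covariance-level shadow of Theorem \ref{thm-Gaussian map-char} and to prove the two implications separately: $(ii)\Rightarrow(i)$ is the constructive half, resting on Propositions \ref{prop-symplectic-extension} and \ref{prop-mean-covar-gaussian symmetry}, while $(i)\Rightarrow(ii)$ extracts the channel parameters from Theorem \ref{thm-Gaussian map-char} and matches them against the given $X,Y$.

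For $(ii)\Rightarrow(i)$ I would argue constructively. Assume $Y+i(J_{2d}-X^TJ_{2d}X)\geq 0$. The implication $(ii)\Rightarrow(i)$ of Proposition \ref{prop-symplectic-extension} then produces a $(J_{2d}\oplus J_{2d'})$-symplectic matrix $M=\bmqty{X&M_{12}\\M_{21}&M_{22}}$ with $M_{21}^TM_{21}=Y$ for some $d'\geq 1$. Since $Sp(2(d+d'),J_{2d}\oplus J_{2d'})$ is closed under inversion, $L:=M^{-1}$ is again symplectic and $L^{-1}=M$ has $(1,1)$-block $X$ and $(2,1)$-block $M_{21}$. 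Feeding $L$ into Proposition \ref{prop-mean-covar-gaussian symmetry} with $d_1=d$, $d_2=d'$ and displacement $\bu=0$ gives the quantum channel $\Psi(\varrho)=\tr_2(\Gamma(L)(\varrho\otimes\rho_{(0,I_{2d'})})\Gamma(L)^\dagger)$, whose action on Gaussian states is $\Psi(\rho_{(\m,S)})=\rho_{(X^T\m,\,X^TSX+M_{21}^TM_{21})}=\rho_{(X^T\m,\,X^TSX+Y)}$. This is exactly a Gaussian channel with the prescribed output covariance, proving $(i)$; it is the $\w=0$ specialization of the construction already used in the proof of $\ref{item-Psi}\Rightarrow\ref{item-Stine}$.

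For $(i)\Rightarrow(ii)$ I would start from an abstract channel $\Psi$ realizing the covariance map $S\mapsto X^TSX+Y$. Since $\Psi$ sends Gaussian states to Gaussian states, Theorem \ref{thm-Gaussian map-char} supplies parameters $X',Y'\in M_{2d}(\mbb{R})$ and $\w'\in\mbb{R}^{2d}$ with $\Psi(\rho_{(\m,S)})=\rho_{(X'^T\m+\w',\,X'^TSX'+Y')}$ and, crucially, $Y'+i(J_{2d}-X'^TJ_{2d}X')\geq 0$. By hypothesis the output covariance is also $X^TSX+Y$, so $X^TSX+Y=X'^TSX'+Y'$ for every $S\in\msc{CM}(d)$. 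Differencing this identity at two covariance matrices, and observing that $\{S_1-S_2:S_1,S_2\in\msc{CM}(d)\}$ contains every small multiple of every symmetric matrix (because $cI+iJ_{2d}>0$ for large $c$, so $cI+tH\in\msc{CM}(d)$ for any symmetric $H$ and small $t$), I would reduce to the symmetric-form identity $X^THX=X'^THX'$ for all symmetric $H$.

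The one genuinely non-formal step, and the main obstacle, is passing from this symmetric-form identity to the antisymmetric identity $X^TJ_{2d}X=X'^TJ_{2d}X'$ that condition $(ii)$ actually needs. I would isolate it as a lemma: \emph{if $X^THX=X'^THX'$ for all symmetric $H$, then $X'=\pm X$}. To prove it, fix $\z$ and let $H$ range over all symmetric matrices in $\mel{\z}{X^THX}{\z}=\mel{\z}{X'^THX'}{\z}$; since a symmetric matrix is determined by its trace-pairings with all symmetric matrices, this forces $(X\z)(X\z)^T=(X'\z)(X'\z)^T$, whence $X'\z=\pm X\z$ with a sign that may a priori depend on $\z$. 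A short linearity argument (comparing the signs attached to $\z_1,\z_2,\z_1+\z_2$ for linearly independent images, together with $X\z=0\Rightarrow X'\z=0$) shows the sign is globally constant, giving $X'=\pm X$. Consequently $X'^TJ_{2d}X'=(\pm X)^TJ_{2d}(\pm X)=X^TJ_{2d}X$, and $X'^TSX'=X^TSX$ then forces $Y'=Y$. Substituting into $Y'+i(J_{2d}-X'^TJ_{2d}X')\geq 0$ yields $Y+i(J_{2d}-X^TJ_{2d}X)\geq 0$, which is exactly $(ii)$ (and $Y\geq 0$ follows as recorded in Proposition \ref{prop-symplectic-extension}).
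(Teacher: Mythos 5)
Your proof is correct and follows the same basic route as the paper, whose entire proof is a one-line citation of Theorem \ref{thm-Gaussian map-char} and the construction in \ref{item-Psi}$\implies$\ref{item-Stine}: your direction $(ii)\Rightarrow(i)$ is exactly that construction specialized to $\w=0$ (Proposition \ref{prop-symplectic-extension} to build the symplectic dilation, Proposition \ref{prop-mean-covar-gaussian symmetry} to read off the output covariance). Where you go beyond the paper is in $(i)\Rightarrow(ii)$: the paper tacitly identifies the pair $(X,Y)$ of the hypothesis with the canonical parameters $(X',Y')$ supplied by Theorem \ref{thm-Gaussian map-char}\ref{item-Psi}, whereas a priori one only knows that the two affine maps $S\mapsto X^TSX+Y$ and $S\mapsto X'^TSX'+Y'$ agree on $\msc{CM}(d)$. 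Your auxiliary lemma --- that $X^THX=X'^THX'$ for all symmetric $H$ forces $X'=\pm X$, hence $X'^TJ_{2d}X'=X^TJ_{2d}X$ and then $Y'=Y$ --- closes this gap soundly: the reduction to symmetric $H$ via differences in $\msc{CM}(d)$ is valid because $cI+tH\in\msc{CM}(d)$ for $c>1$ and $t$ small, the rank-one identity $(X\z)(X\z)^T=(X'\z)(X'\z)^T$ does give $X'\z=\pm X\z$, and the constant-sign argument is standard. So your writeup is a strictly more complete version of the paper's argument rather than a different one; the only thing it buys beyond the paper is making explicit that condition $(ii)$ is genuinely a statement about the given $(X,Y)$ and not about some other parameterization of the same channel.
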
   

\begin{proof}
    This follows directly from Theorem \ref{thm-Gaussian map-char} and the proof of the implication \ref{item-Psi} $\implies$ \ref{item-Stine}.
\end{proof}

\section{Characterization of Linear Optical Channels}\label{sec:optics}

 A general \textit{$d$-mode linear optical channel}, also known as a \textit{$d$-mode universal multiport interferometer}, is mathematically described as a Gaussian channel $\Psi:\GCd\to\GCd$ of the form 
 \begin{equation}\label{eq:interferometer}
    \Psi(\varrho) = \tr_2\left(\Gamma(L)\left(\varrho\otimes\rho_{(0,I_{d'})}\right)\Gamma(L)^{\dagger}\right), \qquad \forall~ \varrho\in \TC{\Gamma(\mbb{C}^d)},
 \end{equation}
 where $L\in M_{2(d+d')}(\BR)$ is an orthosymplectic matrix and $d'\geq 1$. Here $d'$ is called the \textit{number of environment modes}. It is known that any such channel can be physically implemented using $d(d-1)/2$ Mach-Zehnder interferometers, that is, an optical device composed of only two $50:50$ beam splitters and two phase shifters \cite{Reck1994-un, Clements2016-hj}. As discussed in the previous section,  such a Gaussian channel is completely described by two matrices $X$ and $Y$. Since linear optical channels are special cases of Gaussian channels, it is important to understand when a Gaussian channel, represented by the matrices $X$ and $Y$, can be physically implemented using a multiport interferometer. We answer this question in the present section. The mathematical result at the heart of this physical problem is an orthosymplectic version of Proposition \ref{prop-symplectic-extension}, which we will prove next. First, we recall a simple linear algebraic fact.

\begin{lemma}\label{lem-Ortho-Sym-mat}
    A matrix $L\in M_{2d}(\BR)$ is an orthosymplectic matrix if and only if $L= \bmqty{A&B\\-B&A}$ for some $A, B \in M_d(\BR)$ with $A^TA+B^TB = I$ and $A^TB$ is symmetric.
\end{lemma}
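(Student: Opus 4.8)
The plan is to exploit the interplay between the two defining properties of an orthosymplectic matrix—orthogonality $L^TL = I_{2d}$ and the symplectic identity $L^TJ_{2d}L = J_{2d}$—to first show that such a matrix must commute with $J_{2d}$, and then to read off both the block shape $\bmqty{A&B\\-B&A}$ and the two stated constraints from a single direct block computation.

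First I would treat the forward implication. Suppose $L$ is orthosymplectic. Orthogonality gives $L^{-1}=L^T$, and substituting this into the symplectic relation $L^TJ_{2d}L=J_{2d}$ yields $L^{-1}J_{2d}L=J_{2d}$, i.e. $J_{2d}L=LJ_{2d}$, so $L$ commutes with $J_{2d}$. A short computation with $J_{2d}=\bmqty{0&I_d\\-I_d&0}$ shows that a real $2d\times 2d$ block matrix $\bmqty{P&Q\\R&S}$ commutes with $J_{2d}$ precisely when $S=P$ and $R=-Q$; hence $L=\bmqty{A&B\\-B&A}$ for some $A,B\in M_d(\BR)$. It then remains to extract the constraints from $L^TL=I_{2d}$. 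Computing $L^TL$ blockwise with $L^T=\bmqty{A^T&-B^T\\B^T&A^T}$ produces diagonal blocks $A^TA+B^TB$ and off-diagonal blocks $\pm(A^TB-B^TA)$; equating with $I_{2d}$ forces $A^TA+B^TB=I_d$ together with $A^TB=B^TA=(A^TB)^T$, which is exactly the assertion that $A^TB$ is symmetric.

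For the converse I would run the same block computation in reverse. Given $L=\bmqty{A&B\\-B&A}$ with $A^TA+B^TB=I_d$ and $A^TB$ symmetric, the formula for $L^TL$ above immediately collapses to $L^TL=I_{2d}$, so $L$ is orthogonal; and since any matrix of this block shape commutes with $J_{2d}$, we get $L^TJ_{2d}L=L^TLJ_{2d}=J_{2d}$, so $L$ is symplectic. Together these make $L$ orthosymplectic. The argument is essentially routine linear algebra, so I do not expect a genuine obstacle; the only point demanding care is the sign bookkeeping—specifically confirming that commutation with $J_{2d}$ yields the block pattern $\bmqty{A&B\\-B&A}$ rather than the complex-structure pattern $\bmqty{A&-B\\B&A}$ used elsewhere in the paper for the identification of $M_d(\BC)$ with $M_{2d}(\BR)$, and checking that the off-diagonal blocks of $L^TL$ reduce correctly to the symmetry condition on $A^TB$.
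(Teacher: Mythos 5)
Your proof is correct; the paper itself states this lemma as a "simple linear algebraic fact" and omits any proof, and your argument (deriving $JL=LJ$ from $L^{-1}=L^T$ together with $L^TJL=J$ to get the block shape, then reading off $A^TA+B^TB=I$ and the symmetry of $A^TB$ from $L^TL=I$) is exactly the routine verification that is being tacitly invoked. The sign bookkeeping you flag is handled correctly: commutation with $J_{2d}$ gives the pattern $\bmqty{A&B\\-B&A}$, and since replacing $B$ by $-B$ preserves both stated conditions, this is consistent with the complex-structure convention $\bmqty{A&-B\\B&A}$ used elsewhere in the paper.
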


% \begin{proof}
%     We only prove the nontrivial direction. So assume that $L$ is orthosymplectic. Then we have $L^T=L^{-1}=J_{2d}^TL^TJ_{2d}$, i.e., $J_{2d}L^T=L^TJ_{2d}$. % orthogonal implies $L^T=L^{-1}$ and symplectic implies $L^{-1}=J^TL^TJ$.
%     Write $L=\bmqty{A&B\\C&D}$, where $A,B,C,D\in M_d(\BR)$. Then $J_{2d}L^T=L^TJ_{2d}$ implies that $B=-C$ and $D=A$. Now, $L$ is symplectic (or orthogonal) implies that $A^TA+B^TB = I$ and $A^TB$ is symmetric.
% \end{proof}

\begin{lemma}\label{lem:orthosymplectic}
    Let $X,Y\in M_{d}(\BR)$ with $\norm{X}\leq 1$ and $Y\geq 0$. Then the following are equivalent:
    \begin{enumerate}[label=(\roman*)]
        \item There exists an orthosymplectic matrix $L=\bmqty{X&B\\-B&X}\in M_{2d}(\BR)$ with $ B^TB=Y$;
        \item $X^TX+Y = I$ and $X^TQ\sqrt{Y}$ is symmetric for some orthogonal matrix $Q\in M_{d}(\BR)$.
    \end{enumerate} 
\end{lemma}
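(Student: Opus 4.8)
The plan is to pivot both implications on the polar decomposition of the off-diagonal block, combined with the explicit block characterization of orthosymplectic matrices in Lemma \ref{lem-Ortho-Sym-mat}. That lemma reduces the orthosymplectic condition on $\bmqty{X&B\\-B&X}$ to two clean algebraic requirements on the blocks, namely $X^TX+B^TB=I$ and $X^TB$ symmetric. Thus the entire problem becomes a matching between the data $(X,Y)$ and such a pair $(X,B)$, where the bridge between $B$ and $Y$ is the single relation $B^TB=Y$.

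For (i) $\Rightarrow$ (ii): given the orthosymplectic $L=\bmqty{X&B\\-B&X}$ with $B^TB=Y$, Lemma \ref{lem-Ortho-Sym-mat} immediately yields $X^TX+Y=X^TX+B^TB=I$ and that $X^TB$ is symmetric. To produce the orthogonal matrix $Q$ demanded by (ii), I would invoke the polar decomposition $B=Q\sqrt{B^TB}=Q\sqrt{Y}$ with $Q$ orthogonal; substituting back gives $X^TB=X^TQ\sqrt{Y}$, which is therefore symmetric, establishing (ii).

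For (ii) $\Rightarrow$ (i): given an orthogonal $Q$ for which $X^TQ\sqrt{Y}$ is symmetric, I would simply define $B:=Q\sqrt{Y}$. Then $B^TB=\sqrt{Y}\,Q^TQ\,\sqrt{Y}=Y$, so $X^TX+B^TB=X^TX+Y=I$ by hypothesis, and $X^TB=X^TQ\sqrt{Y}$ is symmetric by assumption. By Lemma \ref{lem-Ortho-Sym-mat} the matrix $L=\bmqty{X&B\\-B&X}$ is orthosymplectic, and by construction $B^TB=Y$, which is (i).

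The only genuinely nontrivial point --- and the step I would want to state carefully --- is the existence of the polar decomposition $B=Q\sqrt{Y}$ with $Q$ \emph{orthogonal} (not merely a partial isometry) in the (i) $\Rightarrow$ (ii) direction, since $B$ may be singular. I would handle this by noting $\ker B=\ker\sqrt{Y}$ (both equal $\ker Y$), defining $Q$ as the isometry $\sqrt{Y}\,\x\mapsto B\x$ on $\ran\sqrt{Y}=(\ker B)^\perp$, and extending it to an orthogonal map of $\BR^d$ by sending $\ker\sqrt{Y}$ onto the equidimensional subspace $(\ran B)^\perp$; everything else is a direct computation. It is also worth recording that the standing hypotheses are consistent with each condition: if $X^TX+Y=I$ with $Y\geq 0$, then $X^TX=I-Y\leq I$, so $\norm{X}\leq 1$ is automatic, while $Y\geq 0$ is precisely what makes $\sqrt{Y}$ meaningful.
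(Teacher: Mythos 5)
Your proof is correct and follows essentially the same route as the paper: reduce via Lemma \ref{lem-Ortho-Sym-mat} to the block conditions $X^TX+B^TB=I$ and $X^TB$ symmetric, and pass between $B$ and $Y$ by the real polar decomposition $B=Q\sqrt{Y}$ with $Q$ orthogonal. Your explicit construction of an orthogonal (rather than merely partially isometric) $Q$ when $B$ is singular is a detail the paper leaves implicit, but it is the same argument.
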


\begin{proof}
    $(i)\Rightarrow (ii)$ Assume that there exists an orthosymplectic matrix $L=\bmqty{X&B\\-B&X}\in M_{2d}(\BR)$ with $ B^TB=Y$. Then, by the above lemma,  $X^TX+Y=I$ and $X^TB$ is symmetric. But, by polar decomposition, $B=Q\sqrt{B^TB}=Q\sqrt{Y}$ for some orthogonal matrix $Q\in M_{d}(\BR)$. \\
    $(ii)\implies (i)$ Clearly, $L=\bmqty{X&Q\sqrt{Y}\\-Q\sqrt{Y}&X} \in M_{2d}(\BR)$ is an orthosymplectic matrix with the required properties. 
\end{proof}

\begin{theorem}
    Let $\Psi:\GCd\to\GCd$ be a quantum Gaussian channel. Then the following are equivalent: 
    \begin{enumerate}
        \item The channel $\Psi$ can be implemented using a multiport interferometer with $d$ number of environment modes;
        \item There exist matrices $X,Y\in M_{2d}(\BR)$ such that $X^TX+Y = I$, $X^TQ\sqrt{Y}$ is symmetric for some orthogonal matrix $Q\in M_{2d}(\BR)$ and 
        \begin{align}\label{eq-QGC-state-image-1}
            \Psi(\rho_{(\m,S)})=\rho_{(X^T\m,X^TSX+Y)}
        \end{align}
        for all Gaussian states $\rho_{(\m,S)}$.
    \end{enumerate} 
\end{theorem}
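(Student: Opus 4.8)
The plan is to extract the channel's parameters $(X,Y)$ from a dilation via Proposition~\ref{prop-mean-covar-gaussian symmetry} and to recognize the algebraic content of statement~$(2)$ as precisely the orthosymplectic-completion criterion of Lemma~\ref{lem:orthosymplectic}, applied with its ``$d$'' taken to be $2d$. The structural fact to keep in view is that a multiport interferometer with $d$ environment modes is governed by a matrix $L\in M_{4d}(\BR)$ that is orthosymplectic for the \emph{block-diagonal} symplectic form $J_{2d}\oplus J_{2d}$: this is the form under which the environment partial trace factorizes and under which Proposition~\ref{prop-mean-covar-gaussian symmetry} (with $\bu=0$) gives $\Psi(\rho_{(\m,S)})=\rho_{(L_{11}^T\m,\,L_{11}^TSL_{11}+L_{21}^TL_{21})}$, so that the parameters are $X=L_{11}$, $Y=L_{21}^TL_{21}$ and $\w=0$. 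Because such an $L^{-1}$ is the realification of a complex unitary on $\BC^{d}\oplus\BC^{d}$, its system block $X$ necessarily commutes with $J_{2d}$; this is the observation that glues the two descriptions together.

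For $(1)\Rightarrow(2)$, I would read off $X=L_{11}$, $Y=L_{21}^TL_{21}$ and $\w=0$, so \eqref{eq-QGC-state-image-1} is immediate. The identity $X^TX+Y=I$ is the $(1,1)$-block of $(L^{-1})^TL^{-1}=I$. For the symmetry statement I would use that $X$ commutes with $J_{2d}$: then $Y=I-X^TX$ and $\sqrt Y$ are realifications of complex matrices, and choosing $Q$ to be the realification of the unitary polar factor of the complex matrix underlying $X$ makes $X^TQ\sqrt Y$ the realification of a product of commuting positive matrices, hence Hermitian, hence symmetric.

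For $(2)\Rightarrow(1)$, note first that $X^TX+Y=I$ with $Y\geq0$ gives $\norm{X}\leq1$. Since $\Psi$ is a Gaussian channel, its parameters also satisfy $Y+i(J_{2d}-X^TJ_{2d}X)\geq0$; combined with $Y=I-X^TX$ this reads $\Pi-X^T\Pi X\geq0$ for the projection $\Pi=\tfrac12(I+iJ_{2d})$, and compressing by $I-\Pi$ forces $\Pi X(I-\Pi)=0$, whence $X$ (being real) preserves both eigenspaces of $J_{2d}$ and therefore commutes with $J_{2d}$. Consequently $\sqrt Y$ commutes with $J_{2d}$ and I may take $B:=Q\sqrt Y$, with $Q$ the realification of the polar unitary above, to obtain a matrix $M:=\bmqty{X&B\\-B&X}$ that is orthosymplectic (Lemma~\ref{lem:orthosymplectic}) for both $J_{4d}$ and $J_{2d}\oplus J_{2d}$ and satisfies $B^TB=Y$. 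Setting $L:=M^{-1}$ and $\Phi(\varrho):=\tr_2\big(\Gamma(L)(\varrho\otimes\rho_{(0,I_{2d})})\Gamma(L)^\dagger\big)$, Proposition~\ref{prop-mean-covar-gaussian symmetry} yields $\Phi(\rho_{(\m,S)})=\rho_{(X^T\m,\,X^TSX+Y)}=\Psi(\rho_{(\m,S)})$ for every Gaussian state; since both maps are trace-norm continuous and the Gaussian states span a dense subspace of $\TC{\GCd}$, I conclude $\Phi=\Psi$, so $\Psi$ is implementable with $d$ environment modes.

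The main obstacle is precisely this reconciliation of the two symplectic forms: Lemma~\ref{lem:orthosymplectic} produces a matrix orthosymplectic for the standard form $J_{4d}$, while the factorizability of the environment trace in \eqref{eq:interferometer} demands orthosymplecticity for $J_{2d}\oplus J_{2d}$. The bridge is that the relevant $X$ commutes with $J_{2d}$ --- forced, on the channel side, by the Gaussian positivity condition together with $X^TX+Y=I$ --- which is what lets the completion block $B$ be chosen to commute with $J_{2d}$, so that a single matrix is orthosymplectic for both forms and Proposition~\ref{prop-mean-covar-gaussian symmetry} applies.
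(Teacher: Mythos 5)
Your proof is correct and follows the same skeleton as the paper's --- extract $(X,Y)$ from the blocks of $L^{-1}$ via Proposition~\ref{prop-mean-covar-gaussian symmetry}, recognize condition (2) as an orthosymplectic completion statement, and close with the totality of Gaussian states --- but it treats one point with genuinely more care, and that point is worth recording. The paper obtains the block form $L^{-1}=\bmqty{X&B\\-B&X}$ from Lemma~\ref{lem-Ortho-Sym-mat}, whose blocks are the real/imaginary-part blocks attached to the standard form $J_{4d}$, and then feeds that same $X$ and $Y=B^TB$ into Proposition~\ref{prop-mean-covar-gaussian symmetry}, whose blocks are the system/environment blocks attached to $J_{2d}\oplus J_{2d}$; these are different decompositions of $M_{4d}(\BR)$, and the paper identifies them without comment. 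You isolate exactly this as the obstacle and supply the missing bridge: from $Y+i(J_{2d}-X^TJ_{2d}X)\geq 0$ (automatic because $\Psi$ is a channel) together with $Y=I-X^TX$ you deduce, by compressing with the Hermitian projection $\Pi=\tfrac12(I+iJ_{2d})$ and using that $X$ is real, that $X$ commutes with $J_{2d}$; consequently $X$, $\sqrt{Y}$ and the polar unitary $Q$ are all realifications of complex matrices on $\BC^d$, and the single matrix $\bmqty{X&Q\sqrt{Y}\\-Q\sqrt{Y}&X}$ is orthosymplectic for \emph{both} $J_{4d}$ and $J_{2d}\oplus J_{2d}$, so that Proposition~\ref{prop-mean-covar-gaussian symmetry} legitimately applies. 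I verified the key computations: $\Pi X(I-\Pi)=0$ does force $XJ_{2d}=J_{2d}X$, and commutation of $X$ and $B$ with $J_{2d}$ together with $X^TX+B^TB=I$ and $X^TB$ symmetric does give $(J_{2d}\oplus J_{2d})$-orthosymplecticity. The paper's route is shorter; yours is airtight on the coordinate issue, yields the structural by-product that $X$ is the realification of a complex contraction (the passivity of the channel), and shows in passing that the hypothesis ``$X^TQ\sqrt{Y}$ is symmetric for some orthogonal $Q$'' in (2) is automatically satisfiable once $X^TX+Y=I$ and $\Psi$ is a channel, i.e., that hypothesis is redundant for a genuine quantum Gaussian channel.
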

\begin{proof}
    $(i)\Rightarrow (ii)$ Assume that $\Psi$ can be implemented by using a multiport interferometer with $d$ number of environment modes, i.e., there exists an orthosymplectic matrix $L\in M_{2(d+d)}(\BR)$ such that $\Psi$ is given by \eqref{eq:interferometer}. Since $L^{-1}\in M_{4d}(\BR)$ is orthosympletic,  by  Lemma \ref{lem-Ortho-Sym-mat}, we have $L^{-1}=\bmqty{X&B\\ -B&X}$ for some $X,B\in M_{2d}(\BR)$ satisfying $X^TX+B^TB=I$ and $X^TB$ is symmetric. Letting $Y:=B^TB$,  by Lemma \ref{lem:orthosymplectic}, we have $X^TQ\sqrt{Y}$ is symmetric for some orthogonal matrix $Q\in M_{2d}(\BR)$. Furthermore, by Proposition \ref{prop-mean-covar-gaussian symmetry}, it follows that $\Psi$ satisfies \eqref{eq-QGC-state-image-1}.\\
    %%%%%%%%%%% removed proof %%%%%%
    % $(i)\Rightarrow (ii)$ Assume that $\Psi$ can be implemented by using a multiport interferometer, i.e., there exists an orthosymplectic matrix $L\in M_{2(d+d')}(\BR)$ such that $\Psi$ is given by \eqref{eq:interferometer}. Since $L^{-1}$ is also orthosympletic,  by Lemma \ref{lem-Ortho-Sym-mat}, $L^{-1}=\bmqty{X&B\\ -B&X}$ for some $X,B\in M_{d+d'}(\BR)$ with $X^TX+B^TB=I$ and $X^TB$ is symmetric. Now take $Y=B^TB$. By Lemma \ref{lem:orthosymplectic}, $X^TQ\sqrt{Y}$ is symmetric  for some orthogonal matrix $Q\in M_{2d}(\BR)$.\\
    % $(ii)\Rightarrow (i)$ Conversely, assume that there exists an orthogonal matrix $Q\in M_{d+d'}(\BR)$ for some $d'\geq 1$ such that $X^TX+Y = I$ and $X^TQ\sqrt{Y}$ is symmetric. Then, $L:=\bmqty{X&O\sqrt{Y}\\-O\sqrt{Y}&X} \in M_{4d}(\BR)$ is an orthosymplectic matrix. ..... \todo{SK: to fill Show that $\Psi$ is implemented by $F(L^{-1})$}
    %%%%%%%%%%%%%%%%%%%%%%
    $(ii)\Rightarrow (i)$ Conversely, assume that there exists an orthogonal matrix $Q\in M_{2d}(\BR)$ such that $X^TX+Y = I$ and $X^TQ\sqrt{Y}$ is symmetric and $\Psi$ satisfies \eqref{eq-QGC-state-image-1}. Then, $L^{-1}:=\bmqty{X&Q\sqrt{Y}\\-Q\sqrt{Y}&X} \in M_{4d}(\BR)$ is an orthosymplectic matrix. By Proposition \ref{prop-mean-covar-gaussian symmetry}, the map $\Psi'$ defined by
    \begin{align*}
        \Psi'(\varrho):=\tr_2(\Gamma(L)(\varrho\otimes\rho_{(0,I_{d})})\Gamma(L)^\dagger) \qquad \forall~ \varrho\in \TC{\GCd}
    \end{align*}
    defines a quantum Gaussian channel such that 
    \begin{align*}
        \Psi'(\rho_{(\m,S)})=\rho_{(X^Tm,X^TSX+Y)}
    \end{align*}
    for all Gaussian states $\rho_{(\m,S)}$. Since Gaussian states are total we conclude that $\Psi=\Psi'$. 
\end{proof}

\section{Conclusion}

This article provides a mathematically rigorous treatment of the equivalence of various definitions of Gaussian channels. Our analysis helps us to derive necessary and sufficient conditions for the physical implementation of  certain Gaussain channels using linear optical devices. 
Furthermore, it also answers  the questions asked by Parthasarathy \footnote{Note that Parthasarathy  uses $L^2(\BR^n)$ while we use $\Gamma(\BC^n)$.  The theory of Bosonic systems which we consider in this article remains same on both these space via the isomorphism explained in Section II item 4 of \cite{TiKR21}} \cite[page 438]{KRP15}:
Consider the following convex sets:
    \begin{align*}
        \mathscr{F}_d(X) & =\left\{Y : Y \geq 0, X^T S X+ Y \in\msc{CM}(d) \quad \forall~ S \in \msc{CM}(d)\right\}, \\
        \mathscr{F}_d^0(X) & =\left\{Y : Y+i\left(J_{2d}-X^T J_{2 d} X\right) \geq 0\right\},
    \end{align*}
    where $X\in M_{2d}(\BR)$.
    Note that elements of $\msc{CM}(d)$ and $\mathscr{F}_d^0(X)$ are necessarily positive. The following questions were asked by Parthasarathy:
    \begin{enumerate}
        \item 
         What are the necessary and sufficient conditions for the existence of a "symplectic dilation" $L=\bmqty{X&L_{12}\\ L_{21}&L_{22}}$ such that $L_{21}^{T}L_{21}\in \mathscr{F}_d^0(X)$?
        \item 
         Is it true that for every $Y\in \mathscr{F}_d(X)$ there exists a quantum channel that maps a Gaussian state $\rho_{(\m,S)}$ to a Gaussian state with covariance matrix $X^TSX+Y$? 
         \item Are there any Gaussian channels not belonging to the semigroup generated by all reversible, Bosonic, symplectic and quasifree Gaussian channels? (See \cite{KRP15} for details.)
\end{enumerate}
Proposition \ref{prop-symplectic-extension}  affirmatively answers the question (i). We will now show that the answer to question (ii) is \textbf{not true} in general. Note that, by Corollary \ref{cor-KRP-que}, there exists a Gaussian channel that transforms a Gaussian state $\rho_{(\m,S)}$ to a Gaussian state with covariance $X^TSX+Y$ if and only if $Y\in \mathscr{F}_d^0(X)$. %In particular, we get $\mathscr{F}_d^0(X)\subseteq \mathscr{F}_d(X)$, as previously noted by Parthasarathy \cite{KRP15}. Furthermore, the following example shows that $\mathscr{F}_d^0(X)\subsetneq \mathscr{F}_d(X)$. 
The following example, inspired by one provided by Poletti through private communication in a different context, provides a pair $(X,Y)$ such that $Y\in \mathscr{F}_d(X)$ but $Y\notin \mathscr{F}_d^0(X)$.
    \begin{eg}
        Take $X=\bmqty{0&I_d\\I_{d}&0}$, $Y=\bmqty{I_{d}&0\\0&I_{d}}$. Then $Y\in\msc{CM}(d)$, i.e., $Y+iJ_{2d}\geq 0$. If $S\in\msc{CM}(d)$ then $S\geq0$ and hence
    \begin{align*}
        X^TSX+Y+iJ_{2d}\geq 0,
    \end{align*}
 i.e.,  $X^TSX+Y\in\msc{CM}(d)$ implying that $Y\in \mathscr{F}_d(X)$. On the other hand, $Y\notin \mathscr{F}_d^0(X)$ because 
    \begin{align*}
        Y+i\left(J_{2d}-X^T J_{2 d} X\right)&=\bmqty{I_{d}&0\\0&I_{d}}+i\left(\bmqty{0&I_d\\-I_{d}&0}-\bmqty{0&I_d\\I_{d}&0}\bmqty{0&I_d\\-I_{d}&0}\bmqty{0&I_d\\I_{d}&0}\right)\\
        &=\bmqty{I_{d}&2iI_{d}\\-2iI_{d}&I_{d}}\ngeq 0.
    \end{align*}
    % Therefore,  for the choice of $X$ and $Y$ above, the pair $(X,Y)$ does not produce a quantum Gaussian channel.
    \end{eg}
Now, coming to question (iii), the answer is \textbf{no}, because  from Theorem \ref{thm-Gaussian map-char} \ref{item-Stine}, and using the definitions of Parthasarathy we see that \textbf{every Gaussian channel is a symplectic Gaussian channel composed with a displacement, which is a reversible channel}, and hence it is in the semigroup generated by  all reversible, Bosonic, symplectic and quasifree Gaussian channels.

\section*{Acknowledgment}

RD is supported by the  Postdoctoral Fellowship of Indian Institute of Technology Bombay. TCJ expresses gratitude to Saikat Guha for his guidance in learning quantum optics and to the Indian Institute of Technology, Madras, for facilitating his visit to conduct this research.  SK acknowledges partial support from the IoE Project of MHRD (India)  under reference number  SB22231267MAETWO008573,  and is also partially funded by the NBHM grant (No. 02011/10/2023 NBHM (R.P) R\&D II/4225) for this research.

  \appendix  

  \section{Supplementary Analysis of Existing Literature}
  A  trace-norm continuous linear map $\Psi:\TC{\GCd}\to\TC{\GCd}$ is said to be\textit{ positive}, if $\Psi^*$ is positive.  Let $\Psi:\TC{\GCd}\to\TC{\GCd}$ be a positive trace preserving map. In \cite{Pol22}, Poletti called such a map to be a \textit{Gaussian map}  if there exist continuous maps $\alpha_1:\BR^{2d}\times \msc{CM}(d)\to \BR^{2d}$ and $\alpha_2:\BR^{2d}\times \msc{CM}(d)\to \msc{CM}(d)$ such that for every Gaussian state $\rho_{(\m,S)}$ on $\GCd$, the transformed state $\Psi(\rho_{(\m,S)})$ is also a Gaussian state with mean $\alpha_1(\m,S)$ and covariance $\alpha_2(\m,S)$, that is,
    \begin{align}\label{eq-Poletti-hypo}
        \Psi(\rho_{(\m,S)})=\rho_{(\alpha_1(\m,S),\alpha_2(\m,S))},\qquad\forall~(\m,S)\in\mbb{R}^{2d}\times\mathscr{CM}(d).
    \end{align}
    Furthermore, Poletti proved \cite[Theorem 4.1]{Pol22} that if $\Psi$ is a Gaussian map, then there exist $\w\in\mbb{R}^{2d}$ and $X,Y\in M_{2d}(\mbb{R})$ such that 
    \begin{align*}
        \alpha_1(\m,S)=X^T\m+\w,
        \quad
        \alpha_2(\m,S)=X^TSX +Y,
        \quad\mbox{and}\quad 
        Y+i(J_{2d}-X^TJ_{2d}X)\geq 0.
    \end{align*}
    (We have seen that the last condition implies $Y\geq 0$). However, in Proposition \ref{prop-conti-mean and cova} below, we show that any functions $\alpha_j$'s satisfying \eqref{eq-Poletti-hypo} are necessarily continuous, making the explicit continuity assumption superfluous. Furthermore, in Proposition \ref{prop-pos-CP-condn} below, we observe that if there exist $X,Y$ satisfying the inequality $Y+i(J_{2d}-X^TJ_{2d}X)\geq 0$, then $\Psi$ must be a quantum channel. In contrast, Poletti assumed that $\Psi$ is only a positive map. In fact, as demonstrated in Example \ref{ex:non-cp-gaussian}, the transpose map on $\TC{\GCd}$ provides a counterexample: it is a positive map that sends Gaussian states to Gaussian states via continuous maps $\alpha_j$ satisfying \eqref{eq-Poletti-hypo}, yet it is not a quantum channel. This is because its dual, the transpose map on $\B{\GCd}$, is not completely positive (cf.\ \cite{De_Palma2015-lz}). So we refine Poletti's theorem \cite[Theorem 4.1]{Pol22} as follows, using essentially the same proof: 

\begin{theorem}\cite{Pol22}\label{thm-Polletti}
    Let $\Psi:\TC{\GCd}\to\TC{\GCd}$ be a quantum channel. Let $\alpha_1:\mbb{R}^{2d}\times \mathscr{CM}(d)\to\mbb{R}^{2d}$ and $\alpha_2:\mbb{R}^{2d}\times \mathscr{CM}(d)\to\mathscr{CM}(d)$ be two functions such that \eqref{eq-Poletti-hypo} holds.
    % \begin{align}\label{eq-Poletti-hypo}
    %     \Psi(\rho_{(\m,S)})=\rho_{(\alpha_1(\m,S),\alpha_2(\m,S))},\qquad\forall~(\m,S)\in\mbb{R}^{2d}\times\mathcal{CM}(d).
    % \end{align}
    Then there exist $\w\in\mbb{R}^{2d}$ and $X,Y\in M_{2d}(\mbb{R})$ such that 
    \begin{align*}
        \alpha_1(\m,S)=X^T\m+\w\quad\mbox{and}\quad\alpha_2(\m,S)=X^TSX+Y.
    \end{align*}
    Furthermore, $Y+i(J_{2d}-X^TJ_{2d}X)\geq 0$.
\end{theorem}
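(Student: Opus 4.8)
The plan is to work entirely at the level of quantum characteristic functions. Using the duality $\tr(\Psi(\varrho)W(\z))=\tr(\varrho\,\Psi^*(W(\z)))$ together with the definition of $\widehat{\,\cdot\,}$, the hypothesis \eqref{eq-Poletti-hypo} becomes the single master identity
\begin{align*}
\tr\big(\rho_{(\m,S)}\,\Psi^*(W(\z))\big)=\exp\Big(-i\,\alpha_1(\m,S)^T\z-\tfrac12\,\z^T\alpha_2(\m,S)\z\Big),\qquad \m,\z\in\BR^{2d},\ S\in\msc{CM}(d).
\end{align*}
For fixed $(\m,S)$ the right-hand side is the exponential of a polynomial in $\z$ of degree at most $2$ with vanishing constant term, so $\alpha_1(\m,S)$ and $\alpha_2(\m,S)$ are its uniquely determined linear and quadratic coefficients, continuous in $(\m,S)$ by Proposition \ref{prop-conti-mean and cova}. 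The whole statement is then equivalent to proving that $\Psi^*(W(\z))=e^{-i\w^T\z-\frac12\z^TY\z}\,W(X\z)$ for fixed $X,Y,\w$: substituting this, together with $\tr(\rho_{(\m,S)}W(X\z))=\exp\!\big(-i(X^T\m)^T\z-\tfrac12\z^TX^TSX\z\big)$, into the master identity reads off exactly $\alpha_1=X^T\m+\w$ and $\alpha_2=X^TSX+Y$.

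I would treat the dependence on $S$ by noise addition, deferring the dependence on $\m$. For $N\ge0$ write $N=4J_{2d}^{T}\Sigma J_{2d}$ with $\Sigma\ge0$ (possible since $J_{2d}$ is invertible) and set $\mathcal N_N(\varrho)=\int W(\bu)\varrho\,W(\bu)^\dagger\,dp_\Sigma(\bu)$, the random-displacement channel with $\bu$ a centred Gaussian of covariance $\Sigma$; a short Weyl-relation computation gives $\mathcal N_N(\rho_{(\m,S)})=\rho_{(\m,S+N)}$. Applying $\Psi$, using its linearity and trace-norm continuity together with $W(\bu)\rho_{(\m,S)}W(\bu)^\dagger=\rho_{(\m+2J_{2d}\bu,S)}$ (Remark \ref{rmk:quant-gauss-stat-1}), and comparing characteristic functions expresses $\alpha_i(\m,S+N)$ as the Gaussian average over $\bu$ of $\alpha_i(\m+2J_{2d}\bu,S)$. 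Granting the $\m$-structure established below — that $\alpha_2$ is independent of $\m$ and $\alpha_1(\m,S)=X^T\m+g(S)$ is affine with $S$-independent slope — these averaging identities force $g(S+N)=g(S)$ and $\alpha_2(S+N)=\alpha_2(S)+X^TNX$ with the same matrix $X$; since any two elements of $\msc{CM}(d)$ lie below a common element (e.g. $S_1,S_2\le S_1+S_2+I$), it follows that $g\equiv\w$ is constant and $\alpha_2(\m,S)=X^TSX+Y$ for a fixed $Y$.

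The hard part will be the $\m$-dependence itself. The obstruction is that $\Psi$ is not displacement-covariant, so conjugating the master identity by $W(\bu)$ does not simplify $\Psi^*(W(\z))$; indeed this $\m$-rigidity is exactly equivalent to the Weyl-proportionality of $\Psi^*(W(\z))$ sought in the first paragraph, so it cannot be circumvented. I would attack it by differentiating the master identity in $\z$ at $\z=0$, writing the output mean as $\alpha_1(\m,S)=\tr(\rho_{(\m,S)}\Psi^*(\hat R))$ for the quadrature vector $\hat R$, and then exploiting the exact Gaussianity of every image $\Psi(\rho_{(\m,S)})$ — equivalently, the vanishing of all cumulants of order $\ge 3$ — to show that $\Psi^*(\hat R_j)$ acts as a first-order field operator and $\Psi^*(\hat R_j\hat R_k)$ as a second-order one when tested against the (total) family of Gaussian states. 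Affineness of $\alpha_1$ then follows because $\partial_{\m_l}\alpha_1(\m,S)_j=i\,\tr\big(\rho_{(\m,S)}[\hat D_l,\Psi^*(\hat R_j)]\big)$ for suitable first-order displacement generators $\hat D_l$, and this is constant in $\m$ precisely when the commutators are scalar, i.e. when $\Psi^*(\hat R_j)$ is first order; the $\m$-independence of $\alpha_2$ follows in the same way from the second-order statement. This is in essence Polletti's computation, and it is the step I expect to require the most care.

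Finally, the inequality $Y+i(J_{2d}-X^TJ_{2d}X)\ge0$ needs only that the images are genuine states. Since $\alpha_2(\m,S)=X^TSX+Y\in\msc{CM}(d)$, we have $X^TSX+Y+iJ_{2d}\ge0$ for every $S\in\msc{CM}(d)$. Testing this against $\bm\zeta\in\BC^{2d}$ and writing $\bm\eta=X\bm\zeta$, I would let $S$ range over squeezed pure-state covariances that saturate the uncertainty bound $\bm\eta^\dagger(S+iJ_{2d})\bm\eta\to0$; the infimum of $\bm\eta^\dagger S\bm\eta$ then tends to $-i\,\bm\eta^\dagger J_{2d}\bm\eta$, and passing to the limit in the displayed inequality yields $\bm\zeta^\dagger\big(Y+i(J_{2d}-X^TJ_{2d}X)\big)\bm\zeta\ge0$. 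Because $J_{2d}-X^TJ_{2d}X$ is antisymmetric, real test vectors give $\bm\zeta^TY\bm\zeta\ge0$, so $Y\ge0$ as well, completing the proof.
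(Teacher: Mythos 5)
There is a genuine gap, and it is concentrated in your final paragraph. You claim that the inequality $Y+i(J_{2d}-X^TJ_{2d}X)\ge 0$ ``needs only that the images are genuine states,'' i.e.\ that it follows from $X^TSX+Y\in\msc{CM}(d)$ for all $S\in\msc{CM}(d)$. This inference is false, and the paper itself exhibits a counterexample in the Conclusion: for $X=\bmqty{0&I_d\\ I_d&0}$ and $Y=I_{2d}$ one has $X^TSX+Y\in\msc{CM}(d)$ for every $S\in\msc{CM}(d)$ (so $Y\in\mathscr{F}_d(X)$), yet $Y+i(J_{2d}-X^TJ_{2d}X)=\bmqty{I_d&2iI_d\\-2iI_d&I_d}\ngeq 0$. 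The flaw in your limiting argument is the claim that $\inf_{S\in\msc{CM}(d)}\bm\eta^\dagger S\bm\eta$ tends to $-i\,\bm\eta^\dagger J_{2d}\bm\eta$: this lower bound comes from $S+iJ_{2d}\ge0$, but it is not attained whenever $-i\,\bm\eta^\dagger J_{2d}\bm\eta<0$, because $S\ge0$ already forces $\bm\eta^\dagger S\bm\eta\ge0$ (for one mode and $\bm\eta=(i,1)^T$ the true infimum is $\tr S\ge 2$, not $-2$). The inequality in the theorem genuinely requires the \emph{complete} positivity of $\Psi$ — this is exactly the content of Proposition \ref{prop-pos-CP-condn} (via \cite{Parthasarathy_2022}) and the point of the distinction between $\mathscr{F}_d(X)$ and $\mathscr{F}_d^0(X)$; it cannot be recovered from Gaussian-state preservation by a merely positive map.

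Separately, the structural core of the theorem — that $\alpha_2$ is independent of $\m$ and $\alpha_1(\m,S)=X^T\m+g(S)$ with an $S$-independent slope equal to the $X$ appearing in $\alpha_2$ — is only sketched in your third paragraph: $\Psi^*(\hat R_j)$ is not defined ($\hat R_j$ is unbounded and $\Psi^*$ acts on $\B{\GCd}$), the differentiability of the master identity in $\z$ and $\m$ is not justified, and the cumulant argument is asserted rather than carried out. Your noise-addition treatment of the $S$-dependence in the second paragraph is sound \emph{granting} that structure, but the structure itself is the theorem. The paper does not reprove this step either: it invokes Poletti's \cite[Theorem 4.1]{Pol22} for the affine form, and its only original contributions here are Proposition \ref{prop-conti-mean and cova} (the continuity of $\alpha_1,\alpha_2$ is automatic, so Poletti's continuity hypothesis can be dropped) and the observation that the hypothesis must be complete positivity rather than positivity (Example \ref{ex:non-cp-gaussian}).
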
    

    In the following, we address certain gaps in the original formulation of Poletti's result. First, we recall the following well-known result called Wigner isomorphism. For a mathematically oriented exposition of Wigner isomorphism one may refer to \cite[Theorem III.5]{TiKR21}.

 \begin{theorem}[\textbf{Wigner isomorphism}]\label{thm:wigner-iso}
       Let $L^2(\BC^d)$ denote the Hilbert space of square integrable functions on $\BC^d$ with respect to the $2d$-dimensional Lebesgue measure on $\BC^d$. Define $\mathbb{F}_d: \TC{\GCd}\to L^2(\BC^d)$, to be the map satisfying 
    \begin{equation}\label{eq:defn-Fn}
        \mathbb{F}_d(\rho)(\z) = \pi^{-d/2}\widehat{\rho}(\z), \qquad\forall~ \z \in \BC^d, \quad\rho\in\TC{\GCd}.
    \end{equation}
    Then $\mathbb{F}_d$ extends uniquely to a Hilbert space isomorphism, again denoted by $\mathbb{F}_d$, from $\mathscr{T}_2({\GH}$) onto $L^2(\BC^d)$, where $\mathscr{T}_2({\GH}$ denotes the Hilbert space  of Hilbert-Schmidt operators on $\GH$.
    \end{theorem}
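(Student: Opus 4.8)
The plan is to deduce the statement from a Plancherel-type orthogonality relation for the Weyl system, of which the claimed isometry and surjectivity are immediate consequences. Writing $\widehat{A}(\z)=\tr(A\,W(\z))$, the heart of the matter is the identity
\[
\pi^{-d}\int_{\BC^d}\overline{\widehat{A}(\z)}\,\widehat{B}(\z)\,d\z=\tr(A^\dagger B),
\]
valid for $A,B$ in a suitable dense class of operators. Once this is in hand it says precisely that $\mathbb{F}_d(\rho)=\pi^{-d/2}\widehat{\rho}$ preserves inner products, and everything else is routine Hilbert-space bookkeeping.

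First I would verify the orthogonality relation on the linear span $\mathcal{D}$ of the rank-one operators $\ket{e(\bu)}\bra{e(\bv)}$, $\bu,\bv\in\BC^d$, each of which is trace class (indeed finite rank). Using $W(\z)\ket{e(\bu)}=\exp{-\tfrac{1}{2}\norm{\z}^2-\braket{\z}{\bu}}\ket{e(\z+\bu)}$ together with the identity $\braket{e(\bu)}{e(\bv)}=\exp{\braket{\bu}{\bv}}$ (immediate from the definition of exponential vectors), the characteristic function becomes the explicit Gaussian
\[
\widehat{\ket{e(\bu)}\bra{e(\bv)}}(\z)=\mel{e(\bv)}{W(\z)}{e(\bu)}=\exp{-\tfrac{1}{2}\norm{\z}^2-\braket{\z}{\bu}+\braket{\bv}{\z}+\braket{\bv}{\bu}}.
\]
The left-hand side of the orthogonality relation then reduces to a Gaussian integral over $\BC^d\cong\BR^{2d}$, which I would evaluate and match against the Hilbert--Schmidt inner product $\tr\big((\ket{e(\bu)}\bra{e(\bv)})^\dagger\ket{e(\bu')}\bra{e(\bv')}\big)=\exp{\braket{\bu}{\bu'}}\,\exp{\braket{\bv'}{\bv}}$; the normalization $\pi^{-d}$ is exactly what forces the two sides to agree.

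Since the span of exponential vectors is dense in $\GCd$, the span $\mathcal{D}$ is dense in the Hilbert--Schmidt space $\mathscr{T}_2(\GCd)$, and the orthogonality relation says that $\mathbb{F}_d|_{\mathcal{D}}$ is isometric into $L^2(\BC^d)$. Hence $\mathbb{F}_d$ extends uniquely to an isometry $\mathbb{F}_d:\mathscr{T}_2(\GCd)\to L^2(\BC^d)$; since $\mathscr{T}_1(\GCd)\subseteq\mathscr{T}_2(\GCd)$, continuity and density guarantee that this extension really is $\rho\mapsto\pi^{-d/2}\widehat{\rho}$ on the whole trace class. For surjectivity it suffices to prove that the range is dense, the range of an isometry being automatically closed. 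Here I would use that $\mathbb{F}_d(\mathcal{D})$ consists of the Gaussians computed above, i.e.\ functions $c\,\exp{-\tfrac{1}{2}\norm{\z}^2+\braket{\bv}{\z}-\braket{\z}{\bu}}$ with $c\in\BC$ and $\bu,\bv$ arbitrary; if some $g\in L^2(\BC^d)$ were orthogonal to all of them, then an entire Gaussian-weighted Fourier--Laplace transform of $g$ in the parameters $(\bu,\bv)$ would vanish identically, forcing $g=0$ by injectivity of that transform. Thus the range is dense, hence all of $L^2(\BC^d)$, and $\mathbb{F}_d$ is the asserted isomorphism.

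The step I expect to be the main obstacle is the explicit computation in the second paragraph: carrying out the multidimensional Gaussian integral cleanly, pinning down that the constant is exactly $\pi^{-d}$, and justifying the interchange of the $\z$-integration with the (infinite) exponential-vector expansions hidden in $\widehat{A}$ and $\widehat{B}$. By contrast, once the Gaussians constituting $\mathbb{F}_d(\mathcal{D})$ are identified, both the isometric-extension step and the density argument are standard.
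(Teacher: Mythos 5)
The paper does not actually prove this theorem: it is quoted from \cite[Theorem III.5]{TiKR21} (the quantum Plancherel theorem), so there is no in-paper proof to compare against. Your argument --- establishing the Parseval identity $\pi^{-d}\int_{\BC^d}\overline{\widehat{A}(\z)}\,\widehat{B}(\z)\,d\z=\tr(A^\dagger B)$ on the span of the dyads $\ketbra{e(\bu)}{e(\bv)}$ via an explicit Gaussian integral, then extending by density and getting surjectivity from totality of the resulting Gaussians in $L^2(\BC^d)$ --- is the standard proof and is essentially the one in the cited reference; the step you flag as the main obstacle does close, since on that dense class there are no infinite expansions to interchange (the characteristic functions are exactly the explicit Gaussians) and the constant comes out of the one-mode identity $\int_{\BC}e^{-\abs{z}^2+\bar a z+\bar z b}\,dx\,dy=\pi e^{\bar a b}$, whose $d$-fold product matches $\tr(A^\dagger B)=\exp\{\braket{\bu}{\bu'}+\braket{\bv'}{\bv}\}$ exactly when the prefactor is $\pi^{-d}$.
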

    
\begin{proposition}\label{prop-weak-trace norm-conv}
    Let $\rho,\rho_n$ be states on $\GCd$. Then $\rho_n\to\rho$ in trace norm if and only if $\tr(\rho_nW(z))\to\tr(\rho W(z))$ for all $z\in\mbb{C}^d$.
\end{proposition}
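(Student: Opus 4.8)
The plan is to prove the two implications separately, with the forward direction being immediate and the reverse one carrying all the difficulty. For the forward direction, since each $W(\z)$ is unitary we have $\norm{W(\z)}=1$, so $\abs{\tr((\rho_n-\rho)W(\z))}\leq\norm{\rho_n-\rho}_1$; hence trace-norm convergence instantly yields pointwise convergence of the characteristic functions $\widehat{\rho_n}(\z)=\tr(\rho_nW(\z))$ for every $\z\in\BC^d$.

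For the reverse direction, the strategy is first to upgrade the pointwise (hence rather weak) convergence of characteristic functions to weak-operator convergence $\rho_n\to\rho$, and then to use positivity together with $\tr\rho_n=\tr\rho=1$ to promote this to trace-norm convergence. To obtain weak-operator convergence I would test $\rho_n$ against the rank-one operators $A:=\ketbra{e(\bu)}{e(\bv)}$ for $\bu,\bv\in\BC^d$. Each such $A$ is Hilbert--Schmidt, and by the Wigner isomorphism (Theorem \ref{thm:wigner-iso}), which is a unitary from $\mathscr{T}_2(\GH)$ onto $L^2(\BC^d)$ and therefore preserves inner products, the Parseval relation $\tr(\rho_n^\dagger A)=\langle\mathbb{F}_d(\rho_n),\mathbb{F}_d(A)\rangle_{L^2}$ gives
\[
  \tr(\rho_n A)=\pi^{-d}\int_{\BC^d}\overline{\widehat{\rho_n}(\z)}\,\widehat{A}(\z)\,d\z,
\]
where $\widehat{A}(\z)=\tr(AW(\z))=\mel{e(\bv)}{W(\z)}{e(\bu)}$. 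A direct computation using the action of $W(\z)$ on exponential vectors shows that $\widehat{A}$ is a Gaussian function of $\z$, so $\widehat{A}\in L^1(\BC^d)$. Since $\abs{\widehat{\rho_n}(\z)}\leq\norm{\rho_n}_1=1$ for all $n,\z$ and $\widehat{\rho_n}(\z)\to\widehat{\rho}(\z)$ pointwise by hypothesis, the dominated convergence theorem with dominating function $\abs{\widehat{A}}$ gives $\tr(\rho_n A)\to\tr(\rho A)$, that is, $\mel{e(\bv)}{\rho_n}{e(\bu)}\to\mel{e(\bv)}{\rho}{e(\bu)}$ for all $\bu,\bv$. Because the exponential vectors are total in $\GCd$ and the operators are uniformly norm-bounded ($\norm{\rho_n}\leq\norm{\rho_n}_1=1$), a routine approximation argument then yields $\mel{\phi}{\rho_n}{\psi}\to\mel{\phi}{\rho}{\psi}$ for all $\phi,\psi\in\GCd$; i.e.\ $\rho_n\to\rho$ in the weak-operator topology.

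The main obstacle is the final upgrade from weak-operator to trace-norm convergence: pointwise convergence of characteristic functions is genuinely weaker than $L^2$-convergence of their Wigner transforms (the uniform bound $1$ is not square-integrable over $\BC^d$), so one cannot read off trace-norm convergence directly from the isomorphism. Here I would use positivity crucially, invoking the well-known fact that on the cone of positive trace-class operators, weak-operator convergence together with convergence of the traces implies trace-norm convergence. Concretely, fixing an orthonormal basis $\{f_k\}$ with finite-rank projections $P_N$ onto $\mathrm{span}\{f_1,\dots,f_N\}$, weak-operator convergence gives $\tr(P_N\rho_n)\to\tr(P_N\rho)$ for each fixed $N$ (a finite sum of converging matrix entries); combined with $\tr\rho_n=\tr\rho=1$ this forces the tails $\tr((I-P_N)\rho_n)$ to be uniformly small for large $N$, and Cauchy--Schwarz estimates on the corner terms $P_N(\rho_n-\rho)(I-P_N)$ and $(I-P_N)(\rho_n-\rho)(I-P_N)$ then control $\norm{\rho_n-\rho}_1$. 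Since $\rho_n,\rho\geq0$ with $\tr\rho_n=\tr\rho=1$, this lemma applies and completes the proof.
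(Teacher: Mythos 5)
Your proof is correct and follows essentially the same route as the paper: the reverse direction is obtained by pairing against rank-one operators in exponential vectors via the Wigner isomorphism, applying dominated convergence (with the Gaussian function $\widehat{A}$ as dominating function) to get weak-operator convergence, and then upgrading to trace-norm convergence using positivity and normalization. The only difference is that the paper cites a theorem from Arazy for this last upgrade, whereas you sketch a self-contained proof of it; your sketch is sound.
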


\begin{proof}
    Assume that $\tr(\rho_nW(z))\to\tr(\rho W(z))$ for all $z\in\mbb{C}^d$. To show that $\rho_n\to \rho$ in trace-norm. By \cite[Theorem in Appendix]{Arazy1981-ce}, it is enough to prove that $\rho_n\to \rho$ in weak operator topology.  For any $\z,\w\in\mbb{C}^d$, by Theorem \ref{thm:wigner-iso}, we have 
    \begin{align}
        \mel{e(\w)}{\rho}{e(\z)} 
            & = \tr \ketbra{e(\z)}{e(\w)} \rho\nonumber\\
            &= \tr (\ketbra{e(\w)}{e(\z)}^\dagger\rho)\nonumber\\ 
            & = \int\limits_{\BC^n}^{}\overline{ \mathbb{F}_d( \ketbra{e(\w)}{e(\z)})(\bu) }\mathbb{F}_d(\rho)\dd \bu\nonumber\\
            &= \int\limits_{\BC^n}^{}\overline{ \mathbb{F}_d( \ketbra{e(\w)}{e(\z)})(\bu) }\pi^{-d/2}\widehat{\rho}(\bu)\dd \bu\nonumber\\
            & =\int\limits_{\BC^d}^{}\overline{ \mathbb{F}_d( \ketbra{e(\w)}{e(\z)})(\bu) }\pi^{-d/2}\left(\lim_{k\rightarrow\infty} \widehat{\rho}_k(\bu)\right)\dd \bu.\label{eq:norm-iff-qcf-pointwise}
    \end{align}
    Now we will use the dominated convergence theorem to interchange the limit and the integration above. To this end, note that for every $\z,\w$, 
    \begin{align*}
        \pi^{-d/2}\mathbb{F}_d(\ketbra{e(\w)}{e(\z)})(\bu)&= \tr \ketbra{e(\w)}{e(\z)}W(\bu)\\
            &= \mel{e(\z)}{W(\bu)}{e(\w)}\\
            &= e^{-\frac{1}{2}\abs{\bu}^2-\braket{\bu}{\w}}\braket{e(\z)}{e(\bu+\w)}\\
            &= e^{\bar{\bv}\bu}e^{-\frac{1}{2}|\z|^2+\bar{\bv}\z-\bar{\z}\bu}. 
    \end{align*}
    Thus, the function  $\bu\mapsto \mathbb{F}_d( \ketbra{e(\w)}{e(\z)})(\bu)$ is integrable. By the dominated convergence theorem, we have
    \begin{align*}
     \int\limits_{\BC^d}^{}\overline{ \mathbb{F}_d( \ketbra{e(\w)}{e(\z)})(\bu) }\pi^{-d/2}\left(\lim_{k\rightarrow\infty} \widehat{\rho}_k(\bu)\right)\dd \bu 
        &=\lim_{k\rightarrow\infty}\int\limits_{\BC^n}^{}\overline{ \mathbb{F}_n( \ketbra{e(\w)}{e(\z)})(\bu) }\pi^{-d/2} \widehat{\rho}_k(\bu)\dd \bu\\
        &= \lim_{k\rightarrow\infty} \mel{e(\w)}{\rho_k}{e(\z)}. 
    \end{align*}
    Finally, combining \eqref{eq:norm-iff-qcf-pointwise} and the above we have proved that 
    \begin{align}
        \mel{e(\w)}{\rho}{e(\z)} = \lim_{k\rightarrow\infty} \mel{e(\w)}{\rho_k}{e(\z)},\quad \forall \z,\w\in \BC^d.
    \end{align}
    Now, the totality of exponential vectors shows that $\rho_k\to \rho$ in the weak operator topology.
    % For $\bu, \bv\in \BC^n$, consider $\rho= \ketbra{e(\bu)}{e(\bv)} \in \B{\GCn}$. Then % Recall the identification $ z \mapsto \z$ between $\CH$ and $\BC^n$.
    % % Let $\mathbb{F}(z) = , \z \in \BC^n$ then
    % \begin{align*}
    % \widehat{\rho}(\z) &= \tr \ketbra{e(\bu)}{e(\bv)}W(\z)\\
    % &= \mel{e(\bv)}{W(\z)}{e(\bu)}\\
    % &= e^{-\frac{1}{2}\abs{\z}^2-\braket{\z}{\bu}}\braket{e(\bv)}{e(\z+\bu)}\\
    % &= e^{\bar{\bv}\bu}e^{-\frac{1}{2}|\z|^2+\bar{\bv}\z-\bar{\z}\bu}. 
    % \end{align*}
    % Thus  $\widehat{\rho}\in L^1(\BC^n)\cap L^2(\BC^n)$.
    % By the assumption and the totality of Weyl unitary operators in 
    %      \cite[Lemma 4.3]{Davies1969-xe}. From Proposition 2.4.20 Tijus book , it follows. \todo{Devendra}
    %       Wigner isomorphism \cite[III.5]{TiKR21}
    %      \begin{align*}
    % \mel{e(\w)}{\rho_0}{e(\z)} & = \tr \ketbra{e(\z)}{e(\w)} \rho_0\\
    % &= \tr (\ketbra{e(\w)}{e(\z)})^\dagger\rho_0\\ 
    % & = \int\limits_{\BC^n}^{}\overline{ \mathbb{F}_n( \ketbra{e(\w)}{e(\z)})(\bu) }\mathbb{F}_n(\rho_0)\dd \bu\\
    %  &= \int\limits_{\BC^n}^{}\overline{ \mathbb{F}_n( \ketbra{e(\w)}{e(\z)})(\bu) }\pi^{-n/2}\widehat{\rho}_0(\bu)\dd \bu\\
    % & =\int\limits_{\BC^n}^{}\overline{ \mathbb{F}_n( \ketbra{e(\w)}{e(\z)})(\bu) }\pi^{-n/2}\left(\lim_{k\rightarrow\infty} \widehat{\rho}_k(\bu)\right)\dd \bu \\
    % &= \lim_{k\rightarrow\infty}\int\limits_{\BC^n}^{}\overline{ \mathbb{F}_n( \ketbra{e(\w)}{e(\z)})(\bu) }\pi^{-n/2} \widehat{\rho}_k(\bu)\dd \bu\\
    % &= \lim_{k\rightarrow\infty} \mel{e(\w)}{\rho_k}{e(\z)}. 
    % \end{align*} 
\end{proof}

\begin{proposition}\label{prop-conti-mean and cova}
   Let $\Psi:\TC{\Gamma(\mbb{C}^d)}\to\TC{\Gamma(\mbb{C}^d)}$ be a positive map and let  $\alpha_1:\BR^{2d}\times \msc{CM}(d)\to \BR^{2d}$ and $\alpha_2:\BR^{2d}\times \msc{CM}(d)\to \msc{CM}(d)$ be such that for every Gaussian state $\rho_{(\m,S)}$, the transformed state $\Psi(\rho_{(\m,S)})$ is also a Gaussian state with mean $\alpha_1(\m,S)$ and covariance $\alpha_2(\m,S)$. Then the maps $\alpha_1,\alpha_2$ are continuous.
\end{proposition}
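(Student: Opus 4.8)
The plan is to factor the statement through trace-norm continuity, using Proposition \ref{prop-weak-trace norm-conv} in both directions. Fix a convergent sequence $(\m_n,S_n)\to(\m,S)$ in $\BR^{2d}\times\msc{CM}(d)$; it suffices to show $\alpha_j(\m_n,S_n)\to\alpha_j(\m,S)$ for $j=1,2$. First I would establish \emph{input continuity}: $\rho_{(\m_n,S_n)}\to\rho_{(\m,S)}$ in trace norm. By Proposition \ref{prop-weak-trace norm-conv} this reduces to pointwise convergence of the quantum characteristic functions, and by \eqref{eq-GS-char-fun-real} we have, in real coordinates, $\widehat{\rho_{(\m_n,S_n)}}(\z)=\exp\!\big(-i\m_n^T\z-\tfrac12\z^TS_n\z\big)\to\exp\!\big(-i\m^T\z-\tfrac12\z^TS\z\big)=\widehat{\rho_{(\m,S)}}(\z)$ for every $\z$, which is immediate from $\m_n\to\m$, $S_n\to S$ and continuity of the exponential.

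Next, since a positive map in our setting is by definition trace-norm continuous, applying $\Psi$ preserves this convergence. Writing $(\m_n',S_n'):=(\alpha_1(\m_n,S_n),\alpha_2(\m_n,S_n))$ and $(\m',S'):=(\alpha_1(\m,S),\alpha_2(\m,S))$, the hypothesis gives $\rho_{(\m_n',S_n')}=\Psi(\rho_{(\m_n,S_n)})\to\Psi(\rho_{(\m,S)})=\rho_{(\m',S')}$ in trace norm, all of these being genuine (Gaussian) states. Now I would run Proposition \ref{prop-weak-trace norm-conv} in reverse to recover parameter convergence: the trace-norm convergence yields $\exp\!\big(-i(\m_n')^T\z-\tfrac12\z^TS_n'\z\big)\to\exp\!\big(-i(\m')^T\z-\tfrac12\z^TS'\z\big)$ pointwise in $\z$. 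Taking moduli removes the phase and gives $\z^TS_n'\z\to\z^TS'\z$ for every $\z$; evaluating at $\z=e_i$ and $\z=e_i+e_j$ (polarization) forces $S_n'\to S'$ entrywise, that is $\alpha_2(\m_n,S_n)\to\alpha_2(\m,S)$. Dividing by the moduli, which are nonzero and convergent, then gives the phase convergence $\exp\!\big(-i(\m_n')^T\z\big)\to\exp\!\big(-i(\m')^T\z\big)$ for all $\z$.

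The main obstacle is this final step: deducing $\m_n'\to\m'$ from pointwise convergence of the phase characters, since a priori the $\m_n'$ could escape to infinity while the characters still converge modulo $2\pi$ on each fixed $\z$. To rule this out I would set $h_n:=\m_n'-\m'$, so that $e^{-ih_n^T\z}\to1$ for every $\z\in\BR^{2d}$, and integrate against a fixed Gaussian weight. Since $\big|e^{-ih_n^T\z}e^{-|\z|^2}\big|=e^{-|\z|^2}$ is integrable, the dominated convergence theorem gives $\int_{\BR^{2d}}e^{-ih_n^T\z}e^{-|\z|^2}\,d\z\to\int_{\BR^{2d}}e^{-|\z|^2}\,d\z=\pi^{d}$, whereas a direct Gaussian computation shows the left-hand side equals $\pi^{d}e^{-|h_n|^2/4}$. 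Hence $e^{-|h_n|^2/4}\to1$, forcing $|h_n|\to0$ and thus $\alpha_1(\m_n,S_n)\to\alpha_1(\m,S)$. (Equivalently, one may invoke L\'evy's continuity theorem: each $\widehat{\rho_{(\m_n',S_n')}}$ is a bona fide Gaussian characteristic function and so is the limit, giving weak convergence of the associated normal laws and hence convergence of their mean and covariance parameters.) Since the sequence was arbitrary, both $\alpha_1$ and $\alpha_2$ are continuous.
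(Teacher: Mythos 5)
Your proposal follows the same overall route as the paper: convert $(\m_n,S_n)\to(\m,S)$ into pointwise convergence of the input characteristic functions, upgrade to trace-norm convergence via Proposition \ref{prop-weak-trace norm-conv}, push through $\Psi$ by its trace-norm continuity, and then run Proposition \ref{prop-weak-trace norm-conv} backwards to get pointwise convergence of the output characteristic functions $\widehat{\Psi(\rho_{(\m_n,S_n)})}(\z)\to\widehat{\Psi(\rho_{(\m,S)})}(\z)$. Where you genuinely differ is the last step, extracting convergence of the parameters from convergence of the Gaussian characteristic functions. The paper simply says ``apply Logarithm to both sides and compare real and imaginary parts,'' which for the covariance part is fine (the modulus is a positive real and $\log$ is continuous there), but for the mean part silently ignores that the complex logarithm is only determined mod $2\pi i$: from $e^{-i(\m_n')^T\z}\to e^{-i(\m')^T\z}$ pointwise one only gets $(\m_n')^T\z\to(\m')^T\z$ modulo $2\pi$ for each fixed $\z$, which does not by itself preclude $\m_n'$ drifting off to infinity. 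You identify exactly this obstacle and close it correctly, either by integrating $e^{-ih_n^T\z}e^{-\abs{\z}^2}$ over $\BR^{2d}$ and using dominated convergence together with the explicit value $\pi^d e^{-\abs{h_n}^2/4}$, or by invoking L\'evy's continuity theorem; the polarization argument for $S_n'\to S'$ is also sound since covariance matrices in $\msc{CM}(d)$ are symmetric. So your proof is correct and in fact slightly stronger than the paper's write-up: it supplies the justification that the paper's logarithm step tacitly assumes.
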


\begin{proof}
    Let $(\m_n,S_n),(\m,S)\in\BR^{2d}\times \msc{CM}(d)$ be such that $(\m_n,S_n)\to (\m,S)$. Define the Gaussian states $\rho:=\rho(\m,S)$ and $\rho_n:=\rho(\m_n,S_n)$ for all $n\geq 1$ on ${\Gamma(\mbb{C}^d)}$. Then, for all $\z\in\mbb{C}^d$,
    \begin{align*}
        \lim_n\widehat{\rho_{n}}(\z)&=\lim_n \exp\{-i\Re{\braket{\m_n}{\z}}-\frac{1}{2}\Re{\mel{\z}{S_n}{\z}}\}\\
                                   &=\exp\{-i\Re{\braket{\m}{\z}}-\frac{1}{2}\Re{\mel{\z}{S}{\z}}\}\\
                                   &=\widehat{\rho}(\z).
    \end{align*}
    Then, by Proposition \ref{prop-weak-trace norm-conv}, we obtain $\rho_{n}\to\rho$ in the trace norm. As $\Psi$ is continuous, we get $\Psi(\rho_{n})\to\Psi(\rho)$ in trace norm. Let $\sigma:=\Psi(\rho)$ and $\sigma_n:=\Psi(\rho_n)$ for all $n\geq 1$. Note that 
    $\sigma$ is a Gaussian state with mean $\alpha_1(\m,S)$ and covariance $\alpha_2(\m,S)$, and similarly $\sigma_n$ is also a Gaussian state with mean $\alpha_1(\m_n,S_n)$ and covariance $\alpha_2(\m_n,S_n)$ for all $n\geq 1$. Again, by Proposition \ref{prop-weak-trace norm-conv}, for all $z\in\mbb{C}^d$ we have $\widehat{\sigma_n}(z)\to\widehat{\sigma}(z)$, i.e., 
    \begin{align*}
        \lim_n &\exp\{-i\Re{\braket{\alpha_1(\m_n,S_n)}{\z}}-\frac{1}{2}\Re{\mel{\z}{\alpha_2(\m_n,S_n)}{\z}}\}\\
        =&\exp\{-i\Re{\braket{\alpha_1(\m,S)}{\z}}-\frac{1}{2}\Re{\mel{\z}{\alpha_2(\m,S)}{\z}}\}.%,\qquad\forall~z\in\mbb{C}^d.
    \end{align*} 
    Fix $\z\in\mbb{C}^d$. Then by applying Logarithm to both sides, % we get 
    %\begin{align*}
    %-i\Re{\ip{\alpha_1(\m_n,S_n)}{z}}-\frac{1}{2}\Re{\ip{z}{\alpha_2(\m_n,S_n)z}}\to -i\Re{\ip{\alpha_1(\m,S)}{z}}-\frac{1}{2}\Re{\ip{z}{\alpha_2(\m,S)z}}
    %\end{align*}
    %Now, 
    and then comparing the  real and imaginary part, we get
    \begin{align*}
        &\lim_n\Re{\braket{\alpha_1(\m_n,S_n)}{\z}}=\Re{\braket{\alpha_1(\m,S)}{\z}}
        \quad\text{and}\quad \\
        &\lim_n\Re{\mel{\z}{\alpha_2(\m_n,S_n)}{\z}}=\Re{\mel{\z}{\alpha_2(\m,S)}{\z}}.
    \end{align*}
    Since $\z\in\mbb{C}^d$ is arbitrary, we conclude that $\alpha_1(\m_n,S_n)\to \alpha_1(\m,S)$ and $\alpha_2(\m_n,S_n)\to \alpha_2(\m,S)$. This completes the proof. 
    \end{proof}
   
\begin{proposition}\label{prop-pos-CP-condn}
    Let $\Psi:\TC{\GCd}\to\TC{\GCd}$ be a positive trace-preserving map such that 
    \begin{align*}
        \Psi(\rho_{(\m,S)})=\rho_{(X^T\m+\w,X^TSX+Y)},\qquad\forall~\rho_{(\m,S)} 
    \end{align*}
    and for some $X,Y\in M_{2d}(\BR)$ and $\w\in\BR^{2d}$. Then $\Psi$ is a quantum channel if and only if $Y+i(J_{2d}-X^TJ_{2d}X)\geq 0$. 
 \end{proposition}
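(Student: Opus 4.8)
The plan is to prove the two implications separately. The forward implication (the matrix inequality forces $\Psi$ to be a channel) I would obtain cheaply by reducing to the dilation machinery already established; the reverse implication (a channel forces the inequality) is the genuine content, and I would extract it by running an entangled Gaussian state through $\mathrm{id}\otimes\Psi$ and taking a Schur complement.

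For the \textbf{``if'' direction}, assume $Y+i(J_{2d}-X^TJ_{2d}X)\geq 0$. By Proposition \ref{prop-symplectic-extension} there is a $(J_{2d}\oplus J_{2d'})$-symplectic matrix $L=\bmqty{X&L_{12}\\ L_{21}&L_{22}}$ with $L_{21}^TL_{21}=Y$. Then $U:=W(\tfrac12 J_{2d}^T\w\oplus 0)\,\Gamma(L^{-1})$ is a Gaussian unitary (note $2J_{2d}\cdot\tfrac12 J_{2d}^T\w=\w$), and Proposition \ref{prop-mean-covar-gaussian symmetry} shows the Stinespring map $\Psi'(\varrho):=\tr_2\big(U(\varrho\otimes\rho_{(0,I)})U^\dagger\big)$ is a genuine quantum channel with $\Psi'(\rho_{(\m,S)})=\rho_{(X^T\m+\w,\,X^TSX+Y)}$ for every Gaussian state. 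Thus $\Psi$ and $\Psi'$ agree on all Gaussian states, which are total in $\TC{\GCd}$ by \cite{Fannes1975-dh}, so $\Psi=\Psi'$; in particular $\Psi$ is completely positive, i.e.\ a quantum channel.

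For the \textbf{``only if'' direction}, assume $\Psi$ is a channel. The computation in Theorem \ref{thm-Gaussian map-char}, part \ref{item-Psi}$\iff$\ref{item-Psi-dual} (which uses only totality of Gaussian states) gives the dual action $\Psi^*(W(\z))=\exp(-i\w^T\z-\tfrac12\z^TY\z)\,W(X\z)$. Since $\Psi$ is CP, $\mathrm{id}\otimes\Psi$ is again a channel on $\TC{\Gamma(\BC^d)\otimes\Gamma(\BC^d)}$, hence maps states to states. I would apply it to the zero-mean pure (two-mode squeezed) Gaussian state $\rho$ with covariance $\Sigma=\bmqty{cI&sZ\\ sZ&cI}$, where $c=\cosh 2r$, $s=\sinh 2r$ and $Z=\bmqty{I_d&0\\0&-I_d}$; one checks directly that $\Sigma+i(J_{2d}\oplus J_{2d})\geq 0$, so $\rho$ is a valid state. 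Using the dual action and factorizability $W(\z_1\oplus\z_2)=W(\z_1)\otimes W(\z_2)$, the output characteristic function is $\widehat{(\mathrm{id}\otimes\Psi)(\rho)}(\z_1\oplus\z_2)=\exp(-i\w^T\z_2-\tfrac12\z_2^TY\z_2)\,\widehat{\rho}(\z_1\oplus X\z_2)$, so the output is Gaussian with covariance $\Sigma'=\bmqty{cI& sZX\\ sX^TZ& cX^TX+Y}$. Being the covariance of a state, $\Sigma'+i(J_{2d}\oplus J_{2d})\geq 0$. Taking the Schur complement with respect to the positive block $cI+iJ_{2d}$ gives
\[
Y+iJ_{2d}+X^T\big(cI-s^2Z(cI+iJ_{2d})^{-1}Z\big)X\ \geq\ 0 .
\]
The key identity, using $(cI+iJ_{2d})^{-1}=(cI-iJ_{2d})/(c^2-1)$, $Z^2=I$ and $ZJ_{2d}Z=-J_{2d}$, is $cI-s^2Z(cI+iJ_{2d})^{-1}Z=-iJ_{2d}$, valid for every $r>0$; the displayed inequality therefore collapses exactly to $Y+i(J_{2d}-X^TJ_{2d}X)\geq 0$. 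The final claim $Y\geq 0$ then follows as already recorded in Proposition \ref{prop-symplectic-extension}.

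The routine parts are the ``if'' direction and the characteristic-function bookkeeping. The crux is designing the probe state so that the Schur complement lands \emph{precisely} on $-iJ_{2d}$: the identity $cI-s^2Z(cI+iJ_{2d})^{-1}Z=-iJ_{2d}$ is exactly what lets a single finite-squeezing state suffice, with no limit $r\to\infty$ required. Verifying this identity, together with the positivity $\Sigma+i(J_{2d}\oplus J_{2d})\geq 0$ of the probe, is where the real work lies.
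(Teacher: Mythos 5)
Your proof is correct, and it is worth pointing out that the paper does not actually supply its own argument for this proposition: it defers to \cite[Section 3]{Parthasarathy_2022}. Your ``if'' direction is essentially the construction the paper already runs inside Theorem \ref{thm-Gaussian map-char} (\ref{item-Psi}$\Rightarrow$\ref{item-Stine}): dilate $(X,Y)$ to a symplectic matrix via Proposition \ref{prop-symplectic-extension}, build the Stinespring channel $\Psi'$, and identify $\Psi=\Psi'$ using totality of Gaussian states; nothing new there, and no circularity, since that implication of the proposition is not used in Proposition \ref{prop-symplectic-extension} or Proposition \ref{prop-mean-covar-gaussian symmetry}. The ``only if'' direction is where you genuinely diverge from the cited source, which (like most treatments) verifies complete positivity of $W(\z)\mapsto \exp\{-i\w^T\z-\tfrac12\z^TY\z\}W(X\z)$ through a quantum Bochner/positive-definite-kernel criterion. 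You instead probe $\mathrm{id}\otimes\Psi$ with a finitely squeezed two-mode Gaussian state and take a Schur complement; I checked the two computations on which this hinges: the probe covariance is $S_rS_r^T$ for the two-mode squeezing symplectic $S_r=\bmqty{\cosh r\, I&\sinh r\,Z\\ \sinh r\,Z&\cosh r\,I}$ (using $ZJ_{2d}Z=-J_{2d}$ and $JZ+ZJ=0$), so $\Sigma+i(J_{2d}\oplus J_{2d})\ge 0$ indeed holds, and $(cI+iJ_{2d})^{-1}=(cI-iJ_{2d})/s^2$ together with $Z(cI-iJ_{2d})Z=cI+iJ_{2d}$ gives exactly $cI-s^2Z(cI+iJ_{2d})^{-1}Z=-iJ_{2d}$, so the Schur complement collapses to $Y+i(J_{2d}-X^TJ_{2d}X)\ge 0$ at any fixed $r>0$ with no limit required. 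The dual-action formula you borrow from \ref{item-Psi}$\iff$\ref{item-Psi-dual} of Theorem \ref{thm-Gaussian map-char} uses only density of the span of Gaussian states, not complete positivity, so that step is also legitimate. The net effect is a self-contained, elementary proof of the one implication the paper outsources to the literature.
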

 
  For a proof of the above proposition, the reader is referred to \cite[Section 3]{Parthasarathy_2022}.
\begin{eg}\label{ex:non-cp-gaussian}%[Transpose map is a positive  Gaussian-to-Gaussian map but not a CP map]   
    Let $\mathcal{B}$ be the \textit{Fock number basis} (also known as \textit{particle basis} see \cite[Section II]{TiKR21} cf. \cite[Proposition 19.3]{Krp92}) for the space $\GCd$. Let $\Psi:\TC{\GCd}\to \TC{\GCd}$ be the transpose map with respect to the basis $\mathcal{B}$, i.e., $\Psi(\rho):=\rho^T$, which is defined so that the matrix representation of $\rho^T$ in the basis $\mathcal{B}$ is the transpose of the matrix representation of $\rho$ in the basis $\mathcal{B}$.  By \cite[Proposition V.5 and Theorem V.7]{TiKR21}, a state $\rho$ on $\GCd$ is Gaussian if and only if there exists a triple $(\bm{\alpha}, A, \Lambda)$ consisting of a scalar $\bm{\alpha}\in\mbb{C}$, and matrices $A, \Lambda\in M_d(\mbb{C})$ satisfying $A=A^T$ and $\Lambda\geq 0$, such that
    \begin{align}\label{eq:tcj-krp}
        G_\rho(\bu, \bv):= \mel{e(\bar{\bu})}{\rho}{e(\bv)}= c\exp \left\{\mathbf{u}^T \boldsymbol{\alpha}+\bar{\boldsymbol{\alpha}}^T \mathbf{v}+\mathbf{u}^T A \mathbf{u}+\mathbf{u}^T \Lambda \mathbf{v}+\mathbf{v}^T \bar{A} \mathbf{v}\right\},
    \end{align} for all $ \bu,\bv\in \BC^d$, where $c=\mel{e(0)}{\rho}{e(0)}$.
    Note that $\mel{e(\bar{\bu})}{\rho^T}{e(\bv)} = \mel{e(\bar{\bv})}{\rho}{e(\bu)}$. So 
    \begin{align*}
      G_{\rho^T}(\bu, \bv)  =\mel{e(\bar{\bv})}{\rho}{e(\bu)}&=c \exp \left\{\mathbf{v}^T\boldsymbol{\alpha}+\bar{\boldsymbol{\alpha}}^T\mathbf{u}+\mathbf{v}^T A  \mathbf{v}+\mathbf{v}^T \Lambda \mathbf{u}+\mathbf{u}^T  \bar{A}\mathbf{u}\right\}\\
      &=c \exp \left\{\mathbf{u}^T\bar{\boldsymbol{\alpha}}+\boldsymbol{\alpha}^T\mathbf{v}+\mathbf{u}^T  \bar{A}\mathbf{u}+\mathbf{u}^T \Lambda^T \mathbf{v}+\mathbf{v}^T A  \mathbf{v}\right\}, 
    \end{align*}
    which preserves the structure of the right side of \eqref{eq:tcj-krp} with the new triple $(\bar{\bm{\alpha}}, \bar{A}, \Lambda^T)$. Hence, the transpose map $\Psi$  sends Gaussian states to Gaussian states. Note now that the dual map $\Psi^*$ is the transpose map of $\B{\GCd}$ with respect to the same basis $\mathcal{B}$. It is well-known that the transpose map is a positive but not CP map. By Proposition \ref{prop-conti-mean and cova}, we see that the transpose map satisfies the hypothesis of \cite[Theorem 4.1]{Pol22} and \cite[Theorem 2]{Poletti2022-wk}  but transpose map is not CP and hence not a quantum channel.
\end{eg}

\bibliographystyle{alpha}%{unsrt}
\bibliography{ref.bib}

\end{document}